\newtheorem{theorem}{Theorem}
\newtheorem{corollary}{Corollary}
\newtheorem{lemma}{Lemma}
\newtheorem{definition}{Definition}
\newtheorem{assumption}{Assumption}
\newtheorem{example}{Example}
\newtheorem{problem}{Problem}
\def\BibTeX{{\rm B\kern-.05em{\sc i\kern-.025em b}\kern-.08em
    T\kern-.1667em\lower.7ex\hbox{E}\kern-.125emX}}
\begin{document}

\title{A Scalable Game-theoretic Approach to Urban Evacuation Routing and Scheduling
}

\author{
    \IEEEauthorblockN{
        Kazi Ashik Islam,
        Da Qi Chen,
        Madhav Marathe,
        Henning Mortveit,
        Samarth Swarup,
        Anil Vullikanti
    }
    \IEEEauthorblockA{
        \textit{Biocomplexity Institute, University of Virginia}\\
        ki5hd@virginia.edu, daqic@alumni.cmu.edu, \{marathe, henning.mortveit, swarup, vsakumar\}@virginia.edu
    }
}

\newcommand{\minsumpathproblem}{\textsc{mspp}}
\newcommand{\bestresponseproblem}{\textsc{brp}}
\newcommand{\avgcostproblem}{\textsc{a-dcfp}}
\newcommand{\completiontimeproblem}{\textsc{ct-dcfp}}
\newcommand{\feasibilityproblem}{\textsc{f-dcfp}}
\newcommand{\equilibriumproblem}{\textsc{e-dcfp}}
\newcommand{\evacuationgame}{\textsc{egres}}
\newcommand{\confluentevacuationgame}{\textsc{con-egres}}
\newcommand{\bestresponsealgo}{\textsc{brsa}}
\newcommand{\seqactionalgo}{\textsc{saa}}
\newcommand{\confluentconstr}{\textsc{crc}}

\maketitle

\begin{abstract}
Evacuation planning is an essential part of disaster management where the goal is to relocate people under imminent danger to safety. However, finding jointly optimal evacuation routes and a schedule that minimizes the average evacuation time or evacuation completion time, is a computationally hard problem. As a result, large-scale evacuation routing and scheduling continues to be a challenge.
In this paper, we present a game-theoretic approach to tackle this problem. We start by formulating a strategic routing and scheduling game, named the \textit{Evacuation Game: Routing and Scheduling} (\evacuationgame{}), where players choose their route and time of departure.
We show that: (i) every instance of \evacuationgame{} has at least one pure strategy Nash equilibrium, and (ii) an optimal outcome in an instance will always be an equilibrium in that instance. We then provide bounds on how bad an equilibrium can be compared to an optimal outcome. Additionally, we present a polynomial-time algorithm, the \textit{Sequential Action Algorithm} (\seqactionalgo{}), for finding equilibria in a given instance under a special condition.
We use Virginia Beach City in Virginia, and Harris County in Houston, Texas as study areas and construct two \evacuationgame{} instances. 
Our results show that, by utilizing \seqactionalgo{}, we can efficiently find equilibria in these instances that have social objective close to the optimal value.
\end{abstract}

\begin{IEEEkeywords}
Urban Evacuation, Sacalable routing and scheduling, Equilibrium, Price of anarchy.
\end{IEEEkeywords}

\section{Introduction}
Evacuation plans are designed to ensure the safety of people living in areas that are prone to natural or man-made disasters such as hurricanes, tsunamis, wildfires, and toxic chemical spills. Large-scale evacuations have been carried out during the past hurricane seasons in Florida, Texas, Louisiana, and Mississippi. For instance, about $2.5$ million people were ordered to evacuate from Florida before the landfall of Hurricane Ian (2022)~\cite{fox_weather_ian_evacuation_2022, abc_news_ian_evacuation_2022}.
Other examples of hurricanes when such large scale evacuations were carried out include Helene \& Milton (2024), Idalia (2023), Ida (2021), Laura (2020), Irma \& Harvey (2017), Ike \& Gustav (2008), Katrina \& Rita (2005).
It is, therefore, necessary to have effective and efficient evacuation plans in place to ensure the sustainability of cities and communities. Such a plan needs to determine two major components: $(i)$ Evacuation Routes, recommending the evacuees which route to take, and $(ii)$ Evacuation Schedule, recommending the evacuees when to leave.  

In this paper, we approach the problem of finding jointly optimal routes and schedule that minimizes the sum (or average) of the evacuation time of all evacuees. These routes and schedule can then be recommended to the evacuees. Existing research works have shown that this joint optimization problem is not only {\sf NP-hard} but also hard to approximate \cite{Golin2017NonapproximabilityAP,islam2023optimal}. Majority of the solution methods proposed so far (see Section \ref{sec:related_works} for details) utilize 
\textit{time-expanded graphs} to capture the time-varying traffic load on the roads. 
A time-expanded graph contains a copy of each node and edge in the road network at each timestep within a time horizon.
With the extensive road networks of current cities and the large population of urban residents (which necessitates a prolonged evacuation time period), the size of time-expanded graph grows very quickly. As a result, the existing solution methods do not scale to large city or county level problems. Thus, designing a scalable solution approach that finds good evacuation routes and schedule for real-world problem instances, remains an open problem.

In this paper, we present a game-theoretic approach to solving the above-mentioned problem. 
As our \textbf{first} contribution, we formulate a strategic routing and scheduling game, named the \textit{Evacuation Game: Routing and Scheduling} (\evacuationgame{}), where players choose their routes and departure schedule. Each player's goal is to minimize their own evacuation time. However, they cannot exceed the capacity of the roads in the road-network. By formulating \evacuationgame{}, we are able to study it from a game-theoretic perspective. 
As our \textbf{second} contribution, we show that every instance of \evacuationgame{} has at-least one pure strategy Nash equilibrium. We also show that an optimal outcome (i.e., a set of routes and departure schedule chosen by the players that minimizes the sum of the evacuation time of all players) in an \evacuationgame{} instance will always be an equilibrium in that instance. We can, therefore, consider the problem of finding equilibria in a given \evacuationgame{} instance. 
As our \textbf{third} contribution, we provide bounds on how bad an equilibrium outcome can be compared to an optimal outcome. We show that, by carefully choosing the road-network structure and road attributes, it is possible to construct \evacuationgame{} instances where an equilibrium outcome is worse than an optimal outcome.
As our \textbf{fourth} contribution, we present a \textit{polynomial-time algorithm} (\textit{Sequential Action Algorithm}, \seqactionalgo{}) for finding equilibria in a given \evacuationgame{} instance, under a special condition (see Section \ref{sec:eq_calc} for details). \seqactionalgo{} allows us to quickly find equilibria in large \evacuationgame{} instances. 
\textbf{Finally}, we use Virginia Beach City in Virginia, and Harris County in Houston, Texas as our study area. By using real-world road-network data and a synthetic population data, we construct an \evacuationgame{} instance for each of these two areas. Through our experiments we show that: $(i)$ \seqactionalgo{} finds equilibria in our game instances that have social objective (i.e. sum of the evacuation time of all evacuees) close to the optimal social objective, and $(ii)$ \seqactionalgo{} outperforms time-expanded graph based methods in terms of scalability. 
\section{Related Works}
\label{sec:related_works}
The evacuation routing and/or scheduling problem have been approached in several ways in the literature. A review of existing works can be found in the survey paper~\cite{Bayram2016}. Hamacher and Tjandra~\cite{hamacher2002} formulated the problem as a dynamic network flow optimization problem. They introduced the idea of time-expanded graphs that contain a copy of each node and each edge in the road network at each timestep within a time horizon. The authors then solved the problem using mathematical optimization methods. However, their proposed method had very high computational cost. This led the researchers to design and propose several heuristic methods~\cite{Lu:2005:CCR:2156226.2156249,Kim:2007:ERP:1341012.1341039,Shahabi2014,islam2020simulation}. The drawback of these methods is that they only consider the routing problem, i.e., they either do not consider the scheduling problem, or make simplifying assumptions such as evacuees leave at a constant rate. 

Even and Pillac~\emph{et al.}~\cite{even2015convergent}, 
Pillac and Van Hentenryck~\emph{et al.}~\cite{Pillac2015}, Romanski and Van Hentenryck~ \cite{romanski2016benders}, Hasan and Van Hentenryck~\cite{hafiz2021large}, have considered the joint optimization problem of routing and scheduling. They utilize time-expanded graphs to consider time-varying traffic load on the roads, and then formulate the optimization problem as Mixed Integer Programs (MIP). However, the size of time-expanded graphs causes a bottleneck in these approaches. Evacuation scenarios considered within these works have small road-networks ($\sim300$ nodes, $\sim600$ edges) and small number of evacuating vehicles ($\sim100,000$). Moreover, to keep the size of the time-expanded graph manageable, each timestep is defined to have 5 minutes; which is quite coarse-scale considering the high-speed roads available in modern cities. Islam and Chen~\emph{et al.}~\cite{islam2023optimal,Islam2022fairness} proposed heuristic approaches to address these issues.
In contrast, we present a game-theoretic approach that does not rely on time-expanded graphs or MIP formulations. We first formulate a strategic routing and scheduling game and show that every instance of this game has a pure strategy Nash equilibrium. We then present a \textit{polynomial-time} algorithm for finding equilibria. 
As a result, we are able to consider larger evacuation scenarios (road networks with $\sim3,500$ nodes and $\sim7,000$ edges; $\sim1.5$ million evacuating vehicles), and fine-scale timesteps (e.g. $30$ second timesteps).

Existing works on evacuation planning has explored the use of confluent routes~\cite{even2015convergent,romanski2016benders,hafiz2021large, Islam2022fairness,islam2023optimal}. In emergency evacuation scenarios, evacuees tend to follow a leader-and-follower pattern~\cite{DING2020103189}, which leads them to take the same outgoing road when they reach an intersection. As a result, when multiple evacuation routes meet at a node, their subsequent parts become identical. Such routes are known as confluent routes (see example in Figure \ref{fig:confluent_routes}). Confluent routes do not contain any forks. This reduces driver hesitancy, which has been found to be a significant source of delays~\cite{townsend2006federal, wolshon2006louisiana}. Moreover, enforcing confluent routes is easy in practice; after blocking the necessary roads, minimal vehicle guidance is needed~\cite{romanski2016benders}. Due to these practical advantages, Even and Pillac~\emph{et al.}~\cite{even2015convergent}, Romanski and Van Hentenryck~\cite{romanski2016benders}, Hasan and Van Hentenryck~\cite{hafiz2021large}, and Islam and Chen~\emph{et al.}~\cite{Islam2022fairness,islam2023optimal} have considered confluent evacuation planning, where they aim to find an optimal set of confluent routes and schedule. Islam and Chen~\emph{et al.}~\cite{islam2023optimal} proved that this problem is {\sf NP-hard} and also hard to approximate. Because of the practical benefits of using confluent routes, we compute equilibria under the Confluent Route Constraint in our proposed method. 
\section{Problem Formulation}
In this section, we introduce some preliminary terms and then present the formulation of our game.

\begin{definition}
\label{def:evac_network}
A \textbf{road network} is a directed graph $\cG = (\cV, \cA)$ where every edge $e \in \cA$ has ($i$) a capacity $c_e\in\NN$ representing the number of vehicles that can enter the edge at each timestep and ($ii$) a time $\tau_e\in\NN$ representing the time it takes to traverse the edge. An \textbf{evacuation network} is a road network that specifies $\cE, \cS, \cT \subset \cV$, representing a set of source, safe and transit nodes respectively. For each source node $k\in \cE$, $W(k)$ represents the set of evacuees at source node $k$. We assume that $\cE$ and $\cS$ are disjoint (i.e. $\cE \cap \cS = \emptyset$). We also assume that there is a path from each source node $k \in \cE$ to at least one of the safe nodes.
\end{definition}

\begin{definition}
\label{def:evac_game}
Given an evacuation network $\cG$ and an upper bound on evacuation time $T_{max}$, we define the \textbf{Evacuation Game: Routing and Scheduling (\evacuationgame{})} as follows:
\begin{itemize}[leftmargin=*]
    \item \textbf{Players}: We have $N = |\cE|$ players denoted by the set $[N] = \{1, 2, ..., N\}$ where player $i$ corresponds to the source $src_i \in \cE$.
    \item \textbf{Actions}: Player $i$ can take actions $a_i \in A_i$ where $A_i = R_i \times DT_i$. 
    \begin{itemize}[leftmargin=*]
        \item $R_i$ is the set of all possible simple paths (i.e. paths with no cycles) from source $src_i$ to any safe node in $\cS$. 
        \item $DT_i$ is the set of all possible departure time schedules of the evacuees at source $src_i$. We represent a departure time schedule `$dt$' as follows:
        \begin{align*}
            dt &= \{(t, \theta_t)\ |\ t \in [0, T_{max}-1],\\
            & \theta_t = \text{Number of evacuees departing at timestep $t$}\}
        \end{align*}
    \end{itemize}
    
    \item \textbf{Outcome}: An outcome is the action profile $a = (a_1, a_2, ..., a_N)$ where, each player $i$ has chosen a particular action $a_i \in A_i$. In other words, each player chooses a path (to a safe node) and a departure schedule, in outcome $a$. We say: $a \in A, \text{where } A = A_1 \times A_2 \times ... \times A_N$.
    
    \item \textbf{Utility}: Each player $i$ has a utility function $u_i$. With outcome $a$, utility of player $i$ is denoted by $u_i(a)$, where:
    \begin{align}
        \label{eq:utility}
        & u_i(a) = \begin{cases}
            -\sum_{l \in W(i)}t_l & \text{if Case $1$}\\
            -M_1 & \text{Otherwise}
        \end{cases}\\
        & \text{where, }t_l = t_l^d + \sum_{e \in route_i} \tau_e \nonumber
    \end{align}
    Here, $t_l^d$ denotes the departure time of evacuee $l$ from source $src_i$, $\tau_e$ denotes travel time on edge $e$, $route_i$ denotes the set of edges in player $i$'s route, $t_l$ denotes the evacuation time of evacuee $l$, $W(i)$ denotes the set of evacuees under player $i$, and $M_1$ is a very large positive number.
    Case $1$ occurs when the following are true:
    \begin{enumerate}
        \item No edge on player $i$'s route, at any point in time, exceeds its capacity.
        \item All evacuees under player $i$ reach a safe node within time $T_{max}$.
    \end{enumerate} 
\end{itemize}
\end{definition}

An example instance of \evacuationgame{} is provided in Example \ref{example:epg_instance}. In \evacuationgame{}, we make the following assumption:
\begin{assumption}
\label{assumption:completion_time_upper_bound}
Given an instance $\cI$ of \evacuationgame{} with the evacuation network $\cG(\cV, \cA)$, we assume that $T_{max}=n\tau + M -1$. Here, $n$ is the number of nodes in $\cG$, $M$ is total number of evacuees, and $\tau = \sum_{e \in \cA} \tau_e$.
\end{assumption}

The reason for Assumption \ref{assumption:completion_time_upper_bound} is the following lemma:
\begin{lemma}
\label{lemma:cmpletion_time_upper_bound}
Given an evacuation network $\cG(\cV, \cA)$, there exists an outcome where all evacuees arrive at some safe node within time $(n\tau + M - 1)$ without any edge capacity violation. Here, $n$ is the number of nodes in $\cG$, $M$ is total number of evacuees, and $\tau = \sum_{e \in \cA} \tau_e$.
\end{lemma}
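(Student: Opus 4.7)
The plan is to exhibit an explicit outcome that achieves the bound by scheduling the sources one at a time in a pipelined fashion. By the assumption on $\cG$, for every source $k \in \cE$ there is a simple path $P_k$ from $k$ to some safe node; I would fix one such $P_k$ for each $k$ and denote its travel time by $L_k = \sum_{e \in P_k} \tau_e$. Since $P_k$ is simple, each edge appears at most once in it, giving $L_k \leq \tau$.

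Next, I would order the $N = |\cE| \leq n$ sources arbitrarily as $k_1, k_2, \ldots, k_N$, set $s_1 = 0$, and define $s_{i+1} = s_i + |W(k_i)| + \tau$. In the constructed outcome, every evacuee at source $k_i$ follows the path $P_{k_i}$, and the $|W(k_i)|$ evacuees at $k_i$ depart at the consecutive timesteps $s_i, s_i + 1, \ldots, s_i + |W(k_i)| - 1$.

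Feasibility is then checked in two parts. Within a single source, consecutive evacuees take the same route spaced one timestep apart, so at most one evacuee enters any edge of $P_{k_i}$ per timestep, which respects the capacity since $c_e \geq 1$. Between sources, the gap $s_{i+1} - (s_i + |W(k_i)| - 1) = \tau + 1 > L_{k_i}$ ensures that the final $k_i$ evacuee has already reached safety before any $k_{i+1}$ evacuee departs, so no edge (shared or otherwise) can host evacuees from different sources simultaneously. The last evacuee from $k_N$ then arrives by time $s_N + |W(k_N)| - 1 + L_{k_N} \leq \sum_{i=1}^{N} |W(k_i)| + N\tau - 1 = M + N\tau - 1 \leq n\tau + M - 1$, matching the claim.

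I do not anticipate a substantial obstacle here; the crux is the observation that a simple path has travel time at most $\tau$, which makes an offset of $\tau$ between successive sources long enough to drain the pipeline of the previous one. The bound $n\tau + M - 1$ then decomposes naturally into the $N\tau \leq n\tau$ contributed by the $N$ source-to-source waits and the $M - 1$ contributed by the staggered one-per-timestep departures within each source.
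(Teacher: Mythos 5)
Your proposal is correct and follows essentially the same argument as the paper: evacuate the sources one at a time, one evacuee per timestep, with a $\tau$-length buffer between consecutive sources so the network is empty before the next source begins, then sum the delays to get $N\tau + M - 1 \leq n\tau + M - 1$. The key observation — that a simple path has travel time at most $\tau = \sum_{e \in \cA}\tau_e$ — is exactly the one the paper uses.
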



Proof of Lemma \ref{lemma:cmpletion_time_upper_bound} is provided in the Appendix.
Let $a_{-i}$ denote actions of all players excluding player $i$ in outcome $a$, i.e. $a_{-i} = (a_1, ..., a_{i-1}, a_{i+1}, ..., a_n)$. Then: 



\begin{definition}
\label{def:pne}
An outcome $a^*$ is an \textbf{equilibrium} if no player has incentive to deviate unilaterally. Formally,
$$u_i(a_i^*,a_{-i}^*) \geq u_i(a_i, a_{-i}^*), \forall a_i \in A_i \text{ and } \forall i \in [N]$$
\end{definition}

Note that an equilibrium here is a pure strategy Nash equilibrium. 

\begin{definition}
\label{def:social_opt}
An outcome $a^*$ is \textbf{socially optimal} if the sum of the utility of all players is maximum over all possible outcomes. Formally, $$a^* = \argmax_{a \in A} \sum_{i \in N} u_i(a)$$
\end{definition}

To quantify the quality of equilibria, \textit{Price of Anarchy (PoA)} and \textit{Price of Stability (PoS)} is defined as follows:
\begin{definition}
\label{def:poa}
Given an instance $\gamma$ of \evacuationgame{}, let $EQ(\gamma)$ denote the set of equilibrium outcomes in $\gamma$. Let, $U(a) = \sum_{i \in [N]}u_i(a)$. 
Then, \textbf{Price of Anarchy} for the instance $\gamma$ is: 
\begin{align}
    \rho(\gamma) = \frac{\min_{a \in EQ(\gamma)}U(a)}{\max_{a \in A}U(a)}
\end{align}

On the other hand, \textbf{Price of Stability} for the instance $\gamma$ is: 
\begin{align}
    \sigma(\gamma) = \frac{\max_{a \in EQ(\gamma)}U(a)}{\max_{a \in A}U(a)}
\end{align}
\end{definition}

\begin{example}
\label{example:epg_instance}
\begin{figure}[!b]
    \centering
    \begin{subfigure}[b]{0.45\textwidth}
    \centering
    \includegraphics[width=0.75\textwidth]{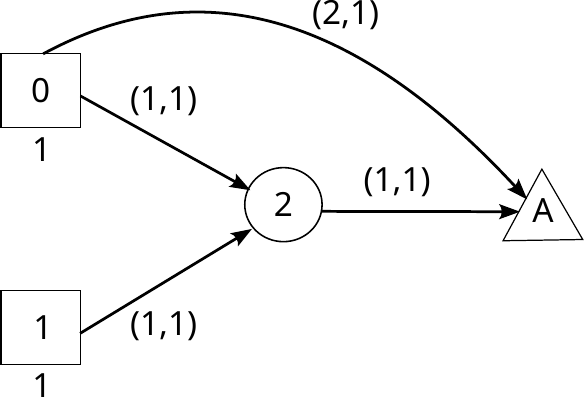}
    \caption{Evacuation network of the example \evacuationgame{} instance. Edges are labeled with travel time and flow capacity, respectively. Source, safe, and transit nodes are denoted by squares, triangles, and circles, respectively. Source nodes are labeled with number of evacuees (the number under the square).}
    \label{fig:example_3_net}
    \end{subfigure}
    \par\bigskip
    \begin{subfigure}[b]{0.5\textwidth}
        \centering
        \small
		\begin{tabular}{p{1.25cm}p{1.75cm}p{1.75cm}p{1.75cm}}
        \toprule
        \diagbox{$P_0$}{$P_1$} & $[1,2,A],$\newline$[(0,1)]$ & $[1,2,A],$\newline$[(1,1)]$ & $[1,2,A],$\newline$[(2,1)]$\\
        \midrule
        $[0,2,A],$\newline$[(0,1)]$ & \textcolor{orange}{$(-M_1,-M_1)$} & \textcolor{blue}{$(-2,-3)$} & $(-2, -4)$\\
        \midrule
        $[0,2,A],$\newline$[(1,1)]$ & $(-3,-2)$ & \textcolor{orange}{$(-M_1,-M_1)$} & $(-3, -4)$\\
        \midrule
        $[0,2,A],$\newline$[(2,1)]$ & $(-4,-2)$ & $(-4,-3)$ & \textcolor{orange}{$(-M_1,-M_1)$}\\
        \midrule
        $[0,A],$\newline$[(0,1)]$ & \textcolor{green}{$(-2,-2)$} & $(-2,-3)$ & $(-2, -4)$\\
        \midrule
        $[0,A],$\newline$[(1,1)]$ & $(-3,-2)$ & $(-3,-3)$ & $(-3, -4)$\\
        \midrule
        $[0,A],$\newline$[(2,1)]$ & $(-4,-2)$ & $(-4,-3)$ & $(-4, -4)$\\
        \bottomrule
        \end{tabular}
        \caption{Utility table for the example \evacuationgame{} instance.}
        \label{tab:example_3_utility}
    \end{subfigure}
    \caption{\evacuationgame{} example instance with $T_{max}=4$.}
\end{figure}

Figure \ref{fig:example_3_net} (evacuation network) and \ref{tab:example_3_utility} (utility table) show an example instance of \evacuationgame{}. Here, we have two players: $P_0$ and $P_1$, corresponding to the two source nodes $0$ and $1$ in Figure \ref{fig:example_3_net}. There is only one safe node ($A$) in this instance. Each player has $1$ evacuee under her.
Figure \ref{tab:example_3_utility} shows the possible actions of the two players. Actions of $P_0$ are shown in rows, and actions of $P_1$ in columns. For instance, the action $[0,2,A], [(0,1)]$ of $P_0$ means: $P_0$ has chosen the route $[0,2,A]$, and one evacuee leaves from node $0$ at timestep $0$.
Each cell of the utility table in Figure \ref{tab:example_3_utility} corresponds to an outcome.
The first and second value in each cell are the utility values of player $P_0, P_1$ respectively. These values are calculated according to Equation \ref{eq:utility}.

In the orange-colored outcomes (Figure \ref{tab:example_3_utility}), utility of both players is $-M_1$. This is because, in these outcomes capacity violation occurs on edge $(2,A)$ at some timestep. For instance, when $P_0$ and $P_1$ choose the actions $[0,2,A], [(0,1)]$ and $[1,2,A], [(0,1)]$ respectively (orange outcome in the top-left cell), $2$ evacuees enter the edge $(2, A)$ at timestep $1$. This causes a capacity violation as capacity of edge $(2, A)$ is $1$.

The green-colored outcome is socially optimal because it has the highest total utility of $-4$. It is also the only socially optimal outcome in this example.

The blue-colored outcome is an equilibrium because none of the players have incentive to unilaterally deviate from it. In this outcome, the total utility of the players is $-5$, which is sub-optimal. Note that the green-colored outcome is also an equilibrium. The \textit{Price of Anarchy} of this particular instance is $5/4$, whereas \textit{Price of Stability} is $1$.

The upper-bound on evacuation completion time in this example is $T_{max}=4$. 
If the evacuee under $P_0$ leaves on or after timestep $3$, then the evacuee will not reach safe node $A$ within timestep $T_{max}$. In that case, the utility of $P_0$ will be $-M_1$. The same is true for $P_1$. Such outcomes are not shown in Figure \ref{tab:example_3_utility}.
\end{example}

Finding the socially optimal outcome in a given \evacuationgame{} instance is {\sf NP-hard} and also hard to approximate \cite{Golin2017NonapproximabilityAP,islam2023optimal}. Therefore, we focus on the following questions/problems:
\begin{problem}
\label{prob:equilibrium_existence}
Does every instance of \evacuationgame{} have a pure strategy Nash equilibrium?
\end{problem}
\begin{problem}
\label{prob:poa_bound}
Can we find bounds on the Price of Anarchy and Price of Stability of \evacuationgame{} instances?
\end{problem}
\begin{problem}
\label{prob:equilibrium_calc}
Given an instance of \evacuationgame{}, can we efficiently find an equilibrium in this instance?
\end{problem}



\section{Theoretical Analysis of \evacuationgame{}}
In this section, we consider the problems \ref{prob:equilibrium_existence}--\ref{prob:equilibrium_calc}. 

\subsection{Existence of Pure Strategy Nash Equilibrium}
To solve Problem \ref{prob:equilibrium_existence}, we prove the following theorem:
\begin{theorem}
\label{theorem:eq_existence}
Every instance of \evacuationgame{} has at least one pure strategy Nash equilibrium where all players get a utility greater than $-M_1$.
\end{theorem}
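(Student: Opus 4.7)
The plan is to exhibit a specific pure strategy Nash equilibrium, namely a socially optimal outcome. First I would argue existence of such an outcome: the action space $A$ is finite (each $R_i$ consists of simple paths in the finite graph $\cG$, and each $DT_i$ consists of integer departure schedules over $[0, T_{max}-1]$), so a maximizer $a^*$ of $U(a) = \sum_i u_i(a)$ exists. By Lemma \ref{lemma:cmpletion_time_upper_bound} there is also a feasible outcome $\hat{a}$ with $u_i(\hat{a}) > -M_1$ for every $i$. Choosing $M_1$ larger than $\sum_i |W(i)| \cdot T_{max}$ then forces $u_i(a^*) > -M_1$ for every $i$, since otherwise $U(a^*) \leq -M_1 < U(\hat{a})$ would contradict optimality of $a^*$.

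The heart of the proof is a decoupling observation drawn from \eqref{eq:utility}: for any player $j$, $u_j$ is a function only of $j$'s own route and departure schedule together with a binary feasibility check on $j$'s own route. Consequently, if player $i$ unilaterally switches from $a_i^*$ to some $a_i'$, the utility of each $j \neq i$ either stays exactly the same (if $j$'s route is still feasible in the profile $(a_i', a_{-i}^*)$) or collapses to $-M_1$ (if a new capacity violation now appears on $j$'s route). No other type of change to $u_j$ is possible, because $j$'s travel time and departure schedule are determined entirely by $a_j^*$.

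I would then conclude by contradiction. Assume some $i$ has a strictly improving deviation, so $u_i(a_i', a_{-i}^*) > u_i(a^*) > -M_1$; in particular $i$'s new route must itself be feasible. If no $j \neq i$ becomes infeasible, then only $u_i$ changes and it strictly increases, so $U(a_i', a_{-i}^*) > U(a^*)$, contradicting optimality of $a^*$. If some $j$ does become infeasible, then $u_j$ drops to $-M_1$; for $M_1$ chosen larger than $\sum_i |W(i)| \cdot T_{max}$, this drop strictly exceeds any feasible gain of $u_i$, so $U(a_i', a_{-i}^*) < U(a^*)$, again contradicting optimality. Hence no profitable unilateral deviation exists and $a^*$ is an equilibrium with all players earning more than $-M_1$.

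The step I expect to be the main obstacle is articulating the decoupling observation precisely; without it one might worry that a deviation by player $i$ could influence other players' travel times in subtle ways through shared edges, which would break the clean dichotomy. Once one recognizes that \eqref{eq:utility} couples players only through the per-player feasibility indicator, the remainder of the argument reduces to a two-case analysis controlled by how large $M_1$ is taken to be.
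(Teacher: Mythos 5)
Your overall strategy is the same as the paper's: take a socially optimal outcome $a^*$, use Lemma \ref{lemma:cmpletion_time_upper_bound} to force $u_i(a^*) > -M_1$ for all $i$, and then argue that no unilateral deviation can be profitable. The existence and feasibility parts are fine (your explicit lower bound on $M_1$ just makes precise what the paper leaves as ``a very large positive number''). The problem is in your final case analysis.

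In your second case --- player $i$ deviates to a feasible $a_i'$ with $u_i(a_i',a_{-i}^*) > u_i(a^*)$ and some $j \neq i$ collapses to $-M_1$ --- you conclude $U(a_i',a_{-i}^*) < U(a^*)$ and call this a contradiction with optimality. It is not: $a^*$ being a maximizer of $U$ is perfectly consistent with other profiles having smaller $U$. If this case could actually occur, player $i$ would still have a strictly profitable unilateral deviation (players are selfish; they do not care that social welfare dropped), and $a^*$ would genuinely fail to be an equilibrium. So as written, the proof does not establish the theorem. The missing idea, which is the crux of the paper's argument, is that this case is \emph{impossible}: since $u_i(a_i',a_{-i}^*) > -M_1$, no edge on player $i$'s new route exceeds its capacity, and the only edges whose load player $i$ can increase are those on her own new route (capacities are per-edge entry counts, and travel times are constant, so she cannot perturb anyone else's timing). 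Hence any newly violated edge on $j$'s route would also lie on $i$'s route and would make $i$ infeasible too --- a contradiction. Once you rule out your second case this way, only your first case remains, $U$ strictly increases, and you get the desired contradiction with optimality of $a^*$. With that repair your argument coincides with the paper's proof.
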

\begin{proof}
For any instance $\gamma\left([N], A, (u_i)_{i \in [N]}\right)$ of \evacuationgame{}, we prove that there exists an outcome: 
\begin{align}
    a^* = \argmax_{a \in A}\sum_{i \in [N]}u_i(a) \label{eq:socially_opt_outcome}
\end{align}
where $u_i(a^*) > -M_1, \forall i \in [N]$, and $a^*$ is an equilibrium.

Let, $U(a) = \sum_{i \in [N]}u_i(a)$. Then, $a^* = \argmax_{a \in A}U(a)$.

Since $\gamma$ has a finite number of players and each player has a finite number of actions, there are only a finite number of outcomes. Therefore, one of these outcomes, we denote it by $a^*$, is a global maximum of the function $U$.

Now, due to Assumption \ref{assumption:completion_time_upper_bound} and Lemma \ref{lemma:cmpletion_time_upper_bound}, $\gamma$ has at-least one outcome $a^\prime$ where $u_i(a^\prime) > -M_1, \forall i \in [N]$. Then, it is also true that $u_i(a^*) > -M_1, \forall i \in [N]$. Because otherwise $\sum_{i \in [N]}u_i(a^\prime) > \sum_{i \in [N]}u_i(a^*)$ and $a^*$ would not be the global maximum of $U$.

Finally, we prove that $a^*$ is an equilibrium. We prove this by contradiction. Let's assume that $a^*$ is not an equilibrium. Then, there exists a player $i$ who can change her action from $a^*_i$ to $\hat{a}_i$ such that:
\begin{align}
    u_i(\hat{a}_i, a^*_{-i}) > u_i(a^*_i, a^*_{-i}) > -M_1 \label{eq:alt_better_action}
\end{align}

Let, $\hat{a} = (\hat{a}_i, a^*_{-i})$.

Since $u_i(\hat{a}) > -M_1$, the switching of player $i$ from action $a^*_i$ to $\hat{a}_i$, cannot cause any edge capacity violation. Moreover, as travel time on each edge is constant, the change in player $i$'s action cannot change the evacuation time of evacuees under other players. Therefore, the change in player $i$'s action $(i)$ cannot cause any of the other players utility to become $-M_1$, and also $(ii)$ cannot decrease the utility of the other players. This means: $u_j(\hat{a}) \geq u_j(a^*), \forall j \in [N], j \neq i$. Combining this with inequality \ref{eq:alt_better_action}:

\begin{align}
    & u_i(\hat{a}) + \sum_{j \in [N], j \neq i}u_j(\hat{a}) \quad  > \quad u_i(a^*) + \sum_{j \in [N], j \neq i}u_j(a^*) \nonumber\\
    & \implies U(\hat{a}) > U(a^*) \nonumber
\end{align}

This means $a^*$ is not a global maximum of the function $U$, which is a contradiction.
\end{proof}

Since, at-least one equilibrium outcome exists in every instance of  \evacuationgame{}, we can now consider the questions: $(i)$ how good or bad an equilibrium can be compared to an optimal outcome, and $(ii)$ how do we compute equilibria in a given instance. From Theorem \ref{theorem:eq_existence}, we can deduce the following results directly:
\begin{corollary}
\label{corollary:opt_eq} Given any instance $\cI$ of \evacuationgame{}, an optimal outcome in $\cI$ will also be an equilibrium outcome in $\cI$. In other words, Price of Stability (PoS) of every \evacuationgame{} instance is $1$.
\end{corollary}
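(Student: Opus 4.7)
The plan is to observe that the proof of Theorem \ref{theorem:eq_existence} already establishes exactly what the corollary claims, just phrased differently. In that proof, the outcome chosen to be the equilibrium is not an arbitrary one: it is taken to be $a^* = \argmax_{a \in A}\sum_{i \in [N]} u_i(a)$, i.e., a socially optimal outcome (per Definition \ref{def:social_opt}). The theorem establishes that this particular $a^*$ is a pure strategy Nash equilibrium. So the first step is simply to point at the proof of Theorem \ref{theorem:eq_existence} and extract the stronger conclusion that the argmax is itself in $EQ(\cI)$.

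Once that observation is in hand, the second step is to derive the Price of Stability statement. By Definition \ref{def:poa},
\begin{equation*}
\sigma(\cI) \;=\; \frac{\max_{a \in EQ(\cI)} U(a)}{\max_{a \in A} U(a)}.
\end{equation*}
Since we have just shown that a maximizer of $U$ over $A$ lies in $EQ(\cI)$, we have $EQ(\cI) \cap \argmax_{a \in A} U(a) \neq \emptyset$, and hence
\begin{equation*}
\max_{a \in EQ(\cI)} U(a) \;=\; \max_{a \in A} U(a),
\end{equation*}
which yields $\sigma(\cI) = 1$.

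There is no real obstacle here — the corollary is essentially a repackaging of a fact that falls out of the existing proof. The only small point to be careful about is to make the argument work for \emph{every} socially optimal outcome, not just the one constructed in the theorem's proof. But inspecting that proof shows nothing special was used about the choice of $a^*$ beyond being a global maximizer of $U$ that also achieves utility greater than $-M_1$ for every player; Assumption \ref{assumption:completion_time_upper_bound} together with Lemma \ref{lemma:cmpletion_time_upper_bound} guarantees the latter holds at any global maximizer. So the same deviation-leads-to-a-higher-$U$ contradiction applies verbatim to any socially optimal $a^*$, giving the corollary in full generality.
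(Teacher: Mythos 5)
Your proposal is correct and matches the paper's own (one-line) proof sketch, which likewise just observes that the $a^*$ constructed in the proof of Theorem~\ref{theorem:eq_existence} is a socially optimal outcome and is shown there to be an equilibrium, whence $\sigma(\cI)=1$. Your additional remark that the argument applies to \emph{every} global maximizer of $U$ (not just the one chosen in the theorem's proof) is a valid and slightly more careful reading, but it does not change the substance of the argument.
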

\begin{proof}[Proof sketch]
Notice that, $a^* = \argmax_{a \in A} \sum_{i \in N}u_i(a)$, in the proof of Theorem \ref{theorem:eq_existence}, is an optimal outcome. 
\end{proof}

\begin{corollary}
\label{corollary:potential_game} \evacuationgame{} is a potential game.
\end{corollary}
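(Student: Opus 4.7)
The plan is to exhibit the social welfare $\Phi(a) = \sum_{i \in [N]} u_i(a)$ as a potential function for \evacuationgame{}, after which the corollary is immediate. Concretely, I would verify that for every player $i$, every $a_{-i}$, and every pair of actions $a_i, a_i' \in A_i$, the improvement $u_i(a_i', a_{-i}) > u_i(a_i, a_{-i})$ forces $\Phi(a_i', a_{-i}) > \Phi(a_i, a_{-i})$. All the needed machinery is already in place from Theorem \ref{theorem:eq_existence}: the $-M_1$ penalty is local to each player's own route, travel times are deterministic, and under a unilateral deviation of player $i$ the edge loads are altered only on the symmetric difference of her old and new routes.

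The verification itself follows the case analysis implicit in the proof of Theorem \ref{theorem:eq_existence}. If both $u_i(a_i, a_{-i})$ and $u_i(a_i', a_{-i})$ exceed $-M_1$, then no edge in the network can change its capacity-violation status under the deviation (on edges that leave player $i$'s route the load only drops from an already feasible level; on edges that enter her new route, post-deviation feasibility forces pre-deviation feasibility as well), so every $u_j$ with $j \neq i$ is unchanged and $\Phi$ shifts by exactly $u_i(a_i', a_{-i}) - u_i(a_i, a_{-i}) > 0$. If instead $u_i(a_i, a_{-i}) = -M_1$ while $u_i(a_i', a_{-i}) > -M_1$, then removing player $i$'s contribution from her old violating edges, together with feasibility of her new route, means that for every $j \neq i$ the utility $u_j$ can only weakly improve, which again forces $\Phi(a_i', a_{-i}) > \Phi(a_i, a_{-i})$. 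Combining the two cases gives the desired one-sided implication.

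The main obstacle, and the reason I would target a (generalized) ordinal rather than an exact potential, is the binary nature of the $-M_1$ penalty: when player $i$'s deviation resolves a capacity violation on an edge shared with several other players, their utilities can simultaneously jump from $-M_1$ to finite values, so the change in $\Phi$ strictly exceeds the change in $u_i$ in that case. One can moreover construct deviations that leave $u_i$ untouched yet move $\Phi$, so neither an exact nor a strict ordinal (biconditional) potential is available with $\Phi = U$; obtaining an exact potential would likely require a Rosenthal-style construction that disaggregates the penalty per $(edge, timestep, vehicle)$, which I do not pursue since the one-sided implication above already suffices to call \evacuationgame{} a potential game.
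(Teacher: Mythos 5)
Your proposal is correct and takes essentially the same route as the paper: the paper's proof likewise exhibits $U(a)=\sum_{i\in[N]}u_i(a)$ as a generalized ordinal potential, relying on the same observation from Theorem~\ref{theorem:eq_existence} that an improving unilateral deviation by player $i$ cannot decrease any other player's utility. Your case split on whether the pre-deviation utility equals $-M_1$, and your remarks on why only the one-sided (generalized ordinal) implication holds, correctly fill in the details the paper leaves as a sketch.
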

\begin{proof}[Proof sketch]
We show that $U(a)=\sum_{i \in N}u_i(a)$ is a generalized ordinal potential \cite{MONDERER1996124} for \evacuationgame{}. 
\end{proof}

Corollary \ref{corollary:opt_eq} indicates that an equilibrium can be as good as an optimal outcome. Corollary \ref{corollary:potential_game}, on the other hand, has important implications about the computation of equilibria. We discuss this further in Section \ref{sec:eq_calc}.

\subsection{Bound on Price of Anarchy}
\begin{figure}[t]
    \begin{subfigure}[b]{\linewidth}
        \centering
        \includegraphics[width=0.75\textwidth]{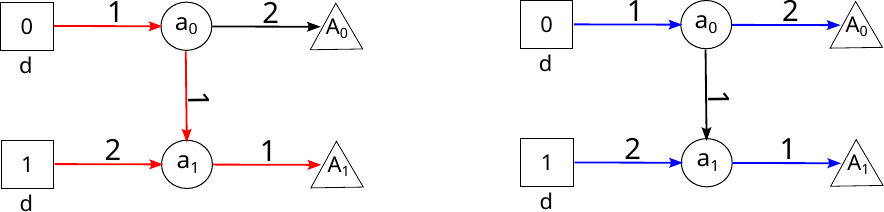}
        \caption{Level 2.}
        \label{fig:poa_unsplt_const_level_2}
    \end{subfigure}
    \par\bigskip
    \begin{subfigure}[b]{\linewidth}
        \centering
        \includegraphics[width=0.75\textwidth]{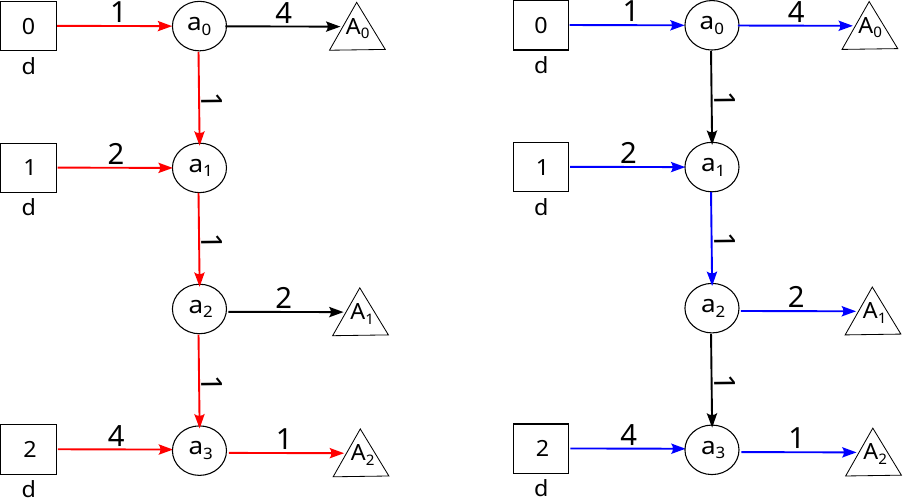}
        \caption{Level 3.}
        \label{fig:poa_unsplt_const_level_3}
    \end{subfigure}
    \par\bigskip
    \begin{subfigure}[b]{\linewidth}
        \centering
        \includegraphics[width=0.75\textwidth]{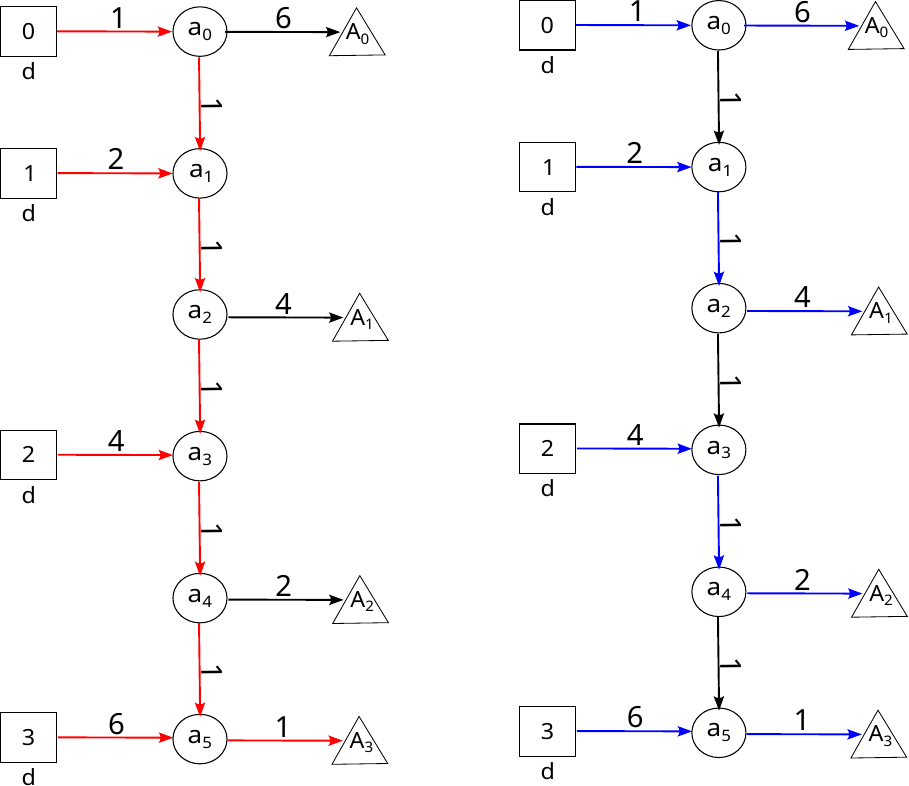}
        \caption{Level 4.}
        \label{fig:poa_unsplt_const_level_4}
    \end{subfigure}
    \caption{\evacuationgame{} instance where price of anarchy is $O(n)$. The edges are labelled with their travel time. All of the edges have capacity $1$. Each source node has $d$ number of evacuees.}
    \label{fig:poa_unsplt_const_lb}
\end{figure}
In Definition \ref{def:poa}, we have defined $U(a) = \sum_{i \in [N]}u_i(a)$, i.e., the sum of the utility of all players for outcome $a$. Let $C(a) = -U(a)$, where $C(a)$ denotes a `cost' associated with outcome $a$. Then, for a given instance $\gamma$ of \evacuationgame{}:
\begin{align}
    \rho(\gamma) &= \frac{\min_{a \in EQ(\gamma)}U(a)}{\max_{a \in A}U(a)} = \frac{\min_{a \in EQ(\gamma)}-C(a)}{\max_{a \in A}-C(a)} \nonumber\\
    &= \frac{\max_{a \in EQ(\gamma)}C(a)}{\min_{a \in A}C(a)} \label{eq:poa_by_cost}
\end{align}

If in outcome $a$, all the players get a utility greater than $-M_1$, then $C(a)$ denotes the sum of the evacuation time of all evacuees in outcome $a$. 
Now, we prove the following lemma:
\begin{lemma}
\label{lemma:poa_lower_bound}
There exists instances $\gamma$ of \evacuationgame{}, where
$\rho(\gamma)$ is $O(n)$. Here, $n$ denotes the number of nodes in the evacuation network.
\end{lemma}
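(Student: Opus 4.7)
The plan is to exhibit an explicit parametric family of \evacuationgame{} instances, one per ``level'' $k$ as depicted in Figure~\ref{fig:poa_unsplt_const_lb}, and to show that for this family the cheapest equilibrium costs a factor $\Omega(n)$ more than a socially optimal outcome. A single family suffices, since the lemma only asks for existence.

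First I would formally spell out the construction. At level $k$ the evacuation network $\cG_k$ has $n=\Theta(k)$ nodes, $k$ source nodes each holding $d$ evacuees, a single safe node, and unit-capacity edges whose travel times match those labelled in Figure~\ref{fig:poa_unsplt_const_lb}. The key structural feature I would extract from the picture is that every source has two qualitatively different classes of routes to the safe node: a \emph{short} route that shares a common bottleneck edge with the short routes of every other source, and a \emph{long} route that is edge-disjoint from the long routes of the other sources.

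Next I would identify a bad equilibrium $a^{\mathrm{eq}}$, namely the outcome in which every player selects her short route. Since the common bottleneck has unit capacity, the $kd$ evacuees can cross it at most one per timestep, so their departures must be staggered across $\Omega(kd)$ timesteps and the total evacuation time is $\Theta(k^2 d^2)$. I would then contrast this with a much cheaper outcome $a^{\mathrm{opt}}$ in which players are assigned to their edge-disjoint long routes: scheduling becomes trivial because no two players share an edge, and the total cost is only $\Theta(kd)$. Using Equation~(\ref{eq:poa_by_cost}), $\rho(\gamma_k)\ge C(a^{\mathrm{eq}})/C(a^{\mathrm{opt}})=\Omega(k)=\Omega(n)$, which is the content of the lemma (reading the informal ``$O(n)$'' as ``as large as a constant times $n$'').

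The main obstacle will be verifying that $a^{\mathrm{eq}}$ really is a Nash equilibrium. In \evacuationgame{} each player's action bundles a route with a full departure schedule, so a deviating player $i$ may simultaneously re-route and re-schedule, potentially trying to slip some of her evacuees into idle bottleneck timesteps on a mixed path. The unit capacities help considerably: any overlap with another player's ongoing flow on a shared edge instantly yields utility $-M_1$ via Equation~(\ref{eq:utility}), which rules out most candidate deviations by feasibility alone. The residual work is to choose the travel-time labels in Figure~\ref{fig:poa_unsplt_const_lb} so that the extra travel time on the long (or any mixed) route dominates any scheduling savings player $i$ could obtain on the bottleneck, and then to check this case by case. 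Once those deviations are enumerated and discarded, the arithmetic converting the two totals into the final $\Omega(n)$ ratio is routine.
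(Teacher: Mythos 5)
Your high-level plan is the same as the paper's (it is essentially a description of the level-$l$ family in Figure~\ref{fig:poa_unsplt_const_lb}: a shared-bottleneck equilibrium versus an edge-disjoint optimum, with the ratio tending to the number of levels), but two concrete points in your write-up would sink the argument as stated. First, the optimal cost is not $\Theta(kd)$. All edges have unit capacity, so even on a dedicated disjoint route a source with $d$ evacuees must stagger departures over $d$ timesteps, contributing $\sum_{t=0}^{d-1}t=\Theta(d^2)$ in departure delay alone; the optimum therefore costs $\Theta(kd^2)$ (the paper gets $l\sum_{t=0}^{d-1}(t+2l-1)$). The ratio $\Theta(k^2d^2)/\Theta(kd^2)=\Theta(k)=\Theta(n)$ still comes out right, but only after this correction, and only in the regime $d\gg k$; for constant $d$ the travel-time terms dominate and the ratio degenerates to $\Theta(1)$.

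Second, and more seriously, your proposed mechanism for certifying the bad equilibrium is self-defeating. You plan to make the long disjoint routes unattractive by giving them enough extra travel time to ``dominate any scheduling savings.'' The worst-positioned player on the bottleneck saves roughly $(k-1)d$ timesteps per evacuee by defecting, so the extra travel time $\Delta$ on her long route would have to satisfy $\Delta\gtrsim (k-1)d$. But then the socially optimal outcome, in which everyone uses a long route, costs at least $kd\cdot\Delta=\Omega(k^2d^2)$ --- the same order as the equilibrium --- and the ratio collapses to $\Theta(1)$. The paper's construction avoids this trap: the alternative (blue) routes have the \emph{same} travel time $2l-1$ as the equilibrium (red) routes, and a unilateral deviation to a blue route is blocked not by travel time but by the fact that it collides with the other players' red flows on a unit-capacity edge, yielding utility $-M_1$ by Equation~(\ref{eq:utility}); only a coordinated switch by all players reaches the cheap outcome. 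You gesture at this feasibility-blocking for ``most'' deviations but then fall back on travel-time domination for the long routes, which is exactly the case where it cannot be used. To repair the proof you must arrange the topology so that each player's cheap alternative intersects the others' equilibrium routes (and then do the scheduling case analysis to confirm no free slot on the bottleneck is exploitable), rather than lengthening the alternatives.
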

\begin{proof}
We present the construction of a specific instance of \evacuationgame{} where the price of anarchy is $O(n)$.
Figure \ref{fig:poa_unsplt_const_lb} shows the construction process. We refer to Figure (\ref{fig:poa_unsplt_const_level_2}) as `level 2', Figure (\ref{fig:poa_unsplt_const_level_3}), (\ref{fig:poa_unsplt_const_level_4}) are referred to 
as `level 3' and `level 4', respectively. In all of these, every edge has a capacity of $1$, and each source has $d$ number of evacuees.

First, let us focus on \textbf{level 2} (Figure \ref{fig:poa_unsplt_const_level_2}). The left figure shows an equilibrium outcome where the players use the red routes. The departure time schedule is as follows: first evacuees under player $0$ evacuate (one evacuee departs at every timestep starting from timestep $0$); once all evacuees under player $0$ have departed, then evacuees under player $1$ evacuate (one evacuee at every timestep starting from timestep $d$). This is an equilibrium because: $(i)$ none of the players can change their route or departure time schedule unilaterally and get a better utility. The cost in this equilibrium would be: $\left(\sum_{t=0}^{d-1}t+3 \right) + \left(\sum_{t=d}^{2d-1}t+3\right) = \sum_{t=0}^{2d-1}t+3 = 2d^2 + 5d$.

Now, the right figure shows a socially optimal outcome, where the players use the blue routes. Here, evacuees under both player $0$ and $1$ start departing at timestep $0$ (one evacuee from each player at each timestep). All evacuees depart by timestep $(d-1)$. The cost in this outcome: $2\sum_{t=0}^{d-1}t+3 = d^2 + 5d$.

So, price of anarchy: $\rho = \frac{2d^2 + 5d}{d^2 + 5d} = \frac{2 + 5/d}{1 + 5/d}$. As $d$ grows larger, we have $\rho \rightarrow 2$.

Now, let us look at \textbf{level 3} (Figure \ref{fig:poa_unsplt_const_level_3}). Through similar analysis, as done for level 2, the cost at equilibrium shown in left figure (red routes) is: $\sum_{t=0}^{3d-1}t+5 = (9d^2 + 27d)/2$. Cost at the socially optimal outcome shown in right figure (blue routes) is: $3\sum_{t=0}^{d-1}t+5 = (3d^2 + 27d)/2$. So, price of anarchy in level 3 instance is: $\rho = \frac{(9d^2 + 27d)/2}{(3d^2 + 27d)/2} = \frac{3 + 9/d}{1 + 9/d}$. As $d$ grows larger, $\rho \rightarrow 3$.

We can extend this to \textbf{level 4}, as shown in Figure \ref{fig:poa_unsplt_const_level_4} and so on. 
For a general \textbf{level $l (\geq 2)$}, cost at equilibrium is: $\sum_{t=0}^{ld-1}t+2l-1 = (l^2d^2 + 4l^2d - 3ld)/2$. Cost at socially optimal outcome is: $l\sum_{t=0}^{d-1}t+2l-1 = (ld^2 + 4l^2d - 3ld)/2$. Price of anarchy is: $\rho = \frac{(l^2d^2 + 4l^2d - 3ld)/2}{(ld^2 + 4l^2d - 3ld)/2} = \frac{l + (4l-3)/d}{1 + (4l-3)/d}$. When $d >> l, \rho \rightarrow l$, i.e., $\rho$ is in $O(l)$. Now, number of nodes in the network in level $l$ is: $n = 4l - 2$. So, $l = (n+2)/4$. So, $\rho$ is in $O(n)$. 
\end{proof}


Lemma \ref{lemma:poa_lower_bound} shows that, it is possible to construct instances of \evacuationgame{} where the price of anarchy is high (i.e., $O(n)$). However, these instances are constructed by carefully choosing the network structure and the edge travel time attributes. 
In our experiments (Section \ref{sec:experiment_results}), we study two real-world road-networks. 

\subsection{Finding Pure Strategy Nash Equilibrium}
\label{sec:eq_calc}
\begin{figure}[!b]
    \begin{subfigure}[b]{0.235\textwidth}
        \centering
        \includegraphics[width=0.8\textwidth]{images/confluent-routes.pdf}
        \caption{Confluent routes.}
        \label{fig:confluent_routes}
    \end{subfigure}
    \hfill
    \begin{subfigure}[b]{0.235\textwidth}
        \centering
        \includegraphics[width=0.8\textwidth]{images/non-confluent-routes.pdf}
        \caption{Non-confluent routes.}
        \label{fig:non_confluent_routes}
    \end{subfigure}
    \caption{Example of confluent and non-confluent routes.}
\end{figure}
From Corollary \ref{corollary:potential_game}, we find that \evacuationgame{} has a generalized ordinal potential. This implies that if we follow the best-response dynamics starting from an initial outcome, we will reach an equilibrium eventually \cite{MONDERER1996124}. However, this does not guarantee that we will reach an equilibrium in polynomial time (e.g. it may take exponential amount of time). In this section, we present a \emph{polynomial-time} algorithm for finding equilibria (in \evacuationgame{} instances) under a particular constraint: the routes of the players need to be \textit{confluent}. We call it the Confluent Route Constraint (\confluentconstr{}).
A set of routes is \textit{confluent} when the following is true: if any two routes within the set meet at a node, then the remaining parts of those routes are identical (see example in Figure \ref{fig:confluent_routes}). Confluent routes have been used extensively in the evacuation planning literature due to their practical advantages (detailed discussion in Section \ref{sec:related_works}). Even and Pillac~\emph{et al.}~\cite{even2015convergent} showed that, if there is a path from every source to at least one of the safe nodes, then  there exists a set of routes that are confluent. As a result, Theorem \ref{theorem:eq_existence} is valid under \confluentconstr{}. Moreover, Lemma \ref{lemma:poa_lower_bound} is also valid under \confluentconstr{}; in Figure \ref{fig:poa_unsplt_const_lb} all of the red and blue routes are confluent.

We now present our method for finding equilibria that satisfies \confluentconstr{}, in \evacuationgame{} instances. First, we present an algorithm to calculate the best response of a player. We then use it to find equilibria.

\subsubsection{Calculating Best Response}
Let $a_{-i}$ denote the action of all players except $i$ in outcome $a$. Then, best response is defined as follows: 
\begin{definition}
Let $a_{-i}$ be fixed. Then, $a_i^*$ is called a best response to $a_{-i}$ if:
$$u_i(a_i^*, a_{-i}) \geq u_i(a_i, a_{-i}), \forall a_i \in A_i$$
\end{definition}

We present the Best Response Strategy Algorithm (\bestresponsealgo{}, Algorithm \ref{alg:solve_brp}) to find the best response of a player in \evacuationgame{}, under the Confluent Route Constraint. 
Here, we provide a proof sketch of the correctness of \bestresponsealgo{} (complete proof of correctness provided in the full version of the paper \cite{islam2024strategic}).
\begin{algorithm}[!b]
    \caption{Best Response Strategy Algorithm (\bestresponsealgo{}). Time complexity: $O(n^2 + m^2\log n + Mmn)$}
    \label{alg:solve_brp}
    \KwIn{ 
        Player $i$,
        Action of all the players except $i$: $a_{-i}$, 
        Evacuation network: $\cG(\cV, \cA)$
    }
    \KwOut{Best response of player $i$: $a_i^*$}
    $\cF \gets$ Directed forest constructed using routes in $a_{-i}$. \label{algline:construct_forest_brp}\\
    $\cG^\prime(\cV, \cA^\prime) \gets$ Copy of $\cG(\cV, \cA)$. \label{algline:net_constr_start}\\
    \smallskip
    \For {each edge $e(u, v) \in \cG^\prime$}{
        \If {$u \in \cF$}{
            Remove edge $e$ from $\cG^\prime$.
        }
    } \label{algline:net_constr_end}
    \smallskip
    $A_{cand} = \{\}$\\
    $x \gets$ Source node of Player $i$.\\
    $C \gets$ Set of distinct edge capacity values from $\cG^\prime$. \label{algline:all_capacity_start}\\
    \smallskip
    \For{each $c \in C$}{ \label{algline:all_capacity_end}
        $\cA^\prime_c = \{e | e \in \cA^\prime\ \text{ and } c_e \geq c\}$\\
        Construct the network $\cG^\prime_c(\cV, \cA^\prime_c)$.\\
        Calculate shortest travel time path from $x$ to all other nodes in $\cG^\prime_c$. \label{algline:shortest_path}\\
        $P_{disjoint} \gets$ Set of shortest travel time paths in $\cG^\prime_c$ from $x$ to any safe node $z \in \cS$. \label{algline:case_1_start}\\
        \For{each path $p \in P_{disjoint}$}{
            $schedule_p \gets$ Schedule on path $p$ with capacity $c$ that maximizes $u_i$. \label{algline:best_schedule_edge_disjoint}\\
            Add action $a = (p, schedule_p)$ to $A_{cand}$ with $u_i(a)$.
        }\label{algline:case_1_end}
        \medskip
        $\cT_\cF \gets$ Set of transit nodes $w$ (i.e. $w \in \cT$) where $w \in \cF$. \label{algline:case_2_start}\\
        $P_{overlap} \gets$ Set of shortest travel time paths in $\cG^\prime_c$ from $x$ to any node $w \in \cT_\cF$.\label{algline:case_2_route_to_transit_calc}\\
        \For{each path $p_{xw} \in P_{overlap}$}{
            $p_{wz} \gets$ Path from transit node $w$ to safe node $z$ in $\cF$.\\
            $p_{xwz} \gets$ Path from $x$ to $z$ through $w$.\\
            $schedule_p \gets$ Schedule on path $p_{xwz}$, with capacity $c$ on $p_{xw}$, that maximizes $u_i$. \label{algline:best_schedule_edge_overlap}\\
            Add action $a = (p, schedule_p)$ to $A_{cand}$ with $u_i(a)$.
        }\label{algline:case_2_end}
    }
    \smallskip
    $a^*_{i} \gets$ Action with maximum utility $u_i(a)$ from $A_{cand}$. \label{algline:highest_utility_start}\\
    \Return $a^*_i$\label{algline:highest_utility_end}
\end{algorithm}
In \bestresponsealgo{}, we have the following assumption:
\begin{assumption}
\label{assumption:existing_routes_confluent}
With $a_{-i}$, all players $j \in [N], j \neq i$ get a utility greater than $-M_1$ and their routes are confluent.
\end{assumption}

\begin{figure}[!b]
    \centering
    \begin{subfigure}[b]{0.33\textwidth}
        \centering
        \includegraphics[width=0.9\textwidth]{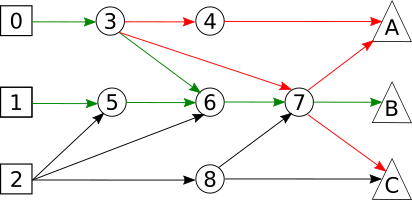}
        \caption{Evacuation Network: $\cG(\cV,\cA)$.}
        \label{fig:brp_1}
    \end{subfigure}
    \par\bigskip
    \begin{subfigure}[b]{0.33\textwidth}
        \centering
        \includegraphics[width=0.9\textwidth]{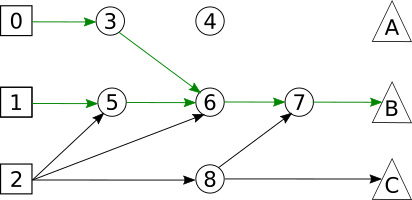}
        \caption{$\hat{\cG}(\cV, \hat{\cA})$}
        \label{fig:brp_2}
    \end{subfigure}
    \par\bigskip
    \begin{subfigure}[b]{0.33\textwidth}
        \centering
        \includegraphics[width=0.9\textwidth]{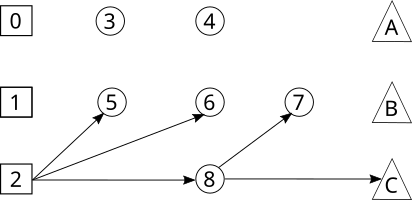}
        \caption{$\cG^\prime(\cV, \cA^\prime)$}
        \label{fig:brp_3}
    \end{subfigure}
    \caption{Proof sketch for \bestresponsealgo{}. The \textit{green} edges denote the routes in $a_{-i}$.}
    \label{fig:brp_example}
\end{figure}

In \bestresponsealgo{}, first we ensure that player $i$'s route is confluent with the routes of the other players. Because of Assumption \ref{assumption:existing_routes_confluent}, the existing routes are already confluent. Therefore, these routes will form either a directed tree rooted at a safe node or a directed forest (i.e., a set of disjoint directed trees). An example is shown in Figure \ref{fig:brp_1}. Here, Player $0$ and $1$ have already chosen their routes 
colored in \textit{green}. Any route of Player $2$ cannot use the \textit{red} edges as that will cause violation of \confluentconstr{}. So, we can discard these \textit{red} edges to get a new network $\hat{\cG}(\cV, \hat{\cA})$ (Figure \ref{fig:brp_2}). 
Now, for routes of Player $i$, there are two possible cases:
\begin{itemize}[leftmargin=*]
\item \textbf{Case 1:} Player $i$'s route is edge disjoint from all the other routes. An example is the route $2 \rightarrow 8 \rightarrow C$ in Figure \ref{fig:brp_2}. To find such routes, we can remove all route edges (i.e., \textit{green} edges) from $\hat{\cG}$ which gives us the network $\cG^\prime$(Figure \ref{fig:brp_3}). We then consider available paths from player $i$'s source node (node $2$ in Figure \ref{fig:brp_3}) to the safe nodes (nodes $A, B, C$ in Figure \ref{fig:brp_3}) in $\cG^\prime$.

\item \textbf{Case 2:} Player $i$'s route can join one of the existing trees at a transit node. Then, the rest of the route (i.e. transit to safe node) is already decided. In our example in Figure \ref{fig:brp_2}, Player $2$'s route can join the existing tree at transit node $5, 6,$ or $7$; the later part of the routes will be $5 \rightarrow 6 \rightarrow 7 \rightarrow B, 6 \rightarrow 7 \rightarrow B$, and $7 \rightarrow B$ respectively. Note that, there can be multiple paths from Player $i$'s source to a particular transit node. We can find all of them from $\cG^\prime$ (Figure \ref{fig:brp_3}).
\end{itemize}

The above two cases cover all the possible routes for Player $i$. In \bestresponsealgo{}, we construct the network $\cG^\prime$ in lines (\ref{algline:net_constr_start}--\ref{algline:net_constr_end}), which is then used in the later part of the algorithm.
Now, the action of Player $i$ also includes a departure time schedule. Once a path is chosen, the best schedule on this path depends on the capacity of the path. 
In \bestresponsealgo{}, we consider paths of all possible capacity values (line \ref{algline:all_capacity_start}, and \ref{algline:all_capacity_end}). For each distinct capacity value, we only consider the shortest travel time paths (line \ref{algline:shortest_path}) of that capacity. In the full version of the paper \cite{islam2024strategic}, we show that it suffices to consider the shortest paths only, at each capacity value. There, we also provide details on how to calculate the best schedule on a path.

Paths of Case 1 and 2 are considered in lines (\ref{algline:case_1_start}--\ref{algline:case_1_end}) and (\ref{algline:case_2_start}--\ref{algline:case_2_end}), respectively.
For each path, we can calculate the schedule that maximizes player $i$'s utility (and therefore avoids any capacity violation). This gives us all the candidates for player $i$'s best response that satisfy the confluent route constraint. We return the action with the highest utility for player $i$ (lines \ref{algline:highest_utility_start}--\ref{algline:highest_utility_end}).
Time complexity of \bestresponsealgo{} is $O(n^2 + m^2\log n + Mmn)$, whereas the input to \bestresponsealgo{} requires space $O(m+n^2+M)$ (proof in the full version of the paper \cite{islam2024strategic}). Here $n,m,M$ are the number of nodes, edges and evacuees in the evacuation network, respectively. 
Therefore, \bestresponsealgo{} is a polynomial-time algorithm.


\subsubsection{Calculating Equilibria}
\begin{algorithm}[!b]
    \caption{Sequential Action Algorithm (\seqactionalgo{}).\\Time complexity: $O(n^3 + m^2n\log n + Mmn^2)$}
    \label{alg:SAA}
    \KwIn{Instance of \evacuationgame{}: $\cI$, A sequence of the players: $Seq$
    }
    \KwOut{An equilibrium outcome: $a^*$}
    $N \gets$ Number of Players in $\cI$.\\
    \For {each $i \in [1, N]$}{
        $P_i \gets$ $i^{th}$ player in sequence $Seq$.\\
        $a_i^* \gets$ Best action (under \confluentconstr{}) of $P_i$ considering only players $P_j, \forall j \in [1,i-1]$  have chosen their actions. \label{algline:choose_best_response}\\
    }
    $a^* \gets (a_1^*, a_2^*, ..., a_N^*)$\\
    \Return $a^*$
\end{algorithm}

We now present the Sequential Action Algorithm (\seqactionalgo{}, Algorithm \ref{alg:SAA}) to find pure strategy Nash equilibria under the Confluent Route Constraint, in a given \evacuationgame{} instance. The input to \seqactionalgo{} is an instance $\cI$ of \evacuationgame{}, and a sequence of the players in $\cI$. In \seqactionalgo{}, we go over the players in order of the input sequence. When processing the $i^{th}$ player in the sequence, we calculate this player's best response under \confluentconstr{} considering that only the preceding $(i-1)$ players have already chosen their action. After processing all the players this way, the chosen action of the players is returned. We now prove the following theorem:

\begin{theorem}
\label{theorem:saa_correctness}
\seqactionalgo{} returns a pure strategy Nash equilibrium under the Confluent Route Constraint, where all players get a utility greater than $-M_1$.
\end{theorem}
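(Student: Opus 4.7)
The plan is to prove Theorem~\ref{theorem:saa_correctness} in two stages: first, show by induction that after each iteration of \seqactionalgo{} the partial outcome is feasible (every processed player has utility greater than $-M_1$ and their routes are jointly confluent); second, use the structure of \evacuationgame{} utilities to upgrade the per-step best-response property into a global Nash equilibrium condition.

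For the first stage, I would induct on $i \in [1,N]$. For the base case ($i=1$) no other routes exist, so any simple path from $src_{P_1}$ to a safe node is available, and by an argument analogous to Lemma~\ref{lemma:cmpletion_time_upper_bound} (releasing evacuees at a rate respecting the minimum edge capacity along the chosen path) \bestresponsealgo{} can return some action $a_1^*$ with $u_1(a_1^*) > -M_1$. For the inductive step, assume $(a_1^*, \ldots, a_{i-1}^*)$ satisfies Assumption~\ref{assumption:existing_routes_confluent}. I would invoke the result of Even and Pillac~\emph{et al.} cited in Section~\ref{sec:eq_calc} to obtain \emph{some} confluent extension of the existing route-forest that includes a route from $src_{P_i}$ to a safe node, and then schedule its evacuees (e.g., staggered over distinct departure times) so that no capacity is violated and all evacuees arrive by $T_{max}$. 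This witnesses a feasible candidate for $P_i$ with utility greater than $-M_1$, so the best response found in line~\ref{algline:choose_best_response} has at least this utility, maintaining the invariant.

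For the second stage, fix any player $P_i$ in the final outcome $a^* = (a_1^*, \ldots, a_N^*)$ and any alternative action $\hat{a}_i$ for $P_i$ that satisfies \confluentconstr{} jointly with $a_{-i}^*$. The key observation is a monotonicity of feasibility: since confluence is a pairwise/tree-compatibility property and capacity constraints involve sums of flows from other players' routes, any $\hat{a}_i$ that is confluent and capacity-feasible against the full tuple $a_{-i}^*$ is \emph{a fortiori} confluent and capacity-feasible against the prefix $(a_1^*, \ldots, a_{i-1}^*)$. Hence $\hat{a}_i$ was available to \bestresponsealgo{} at iteration $i$. Moreover, whenever an action yields utility strictly greater than $-M_1$, Equation~\ref{eq:utility} shows that the utility depends only on the acting player's own route and departure schedule (travel times are fixed and constant). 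Therefore
\begin{align*}
u_i(\hat{a}_i, a_{-i}^*) &= u_i\bigl(\hat{a}_i, a_1^*, \ldots, a_{i-1}^*\bigr) \\
&\leq u_i\bigl(a_i^*, a_1^*, \ldots, a_{i-1}^*\bigr) = u_i(a_i^*, a_{-i}^*),
\end{align*}
where the middle inequality is by the correctness of \bestresponsealgo{} at iteration $i$, and the outer equalities use that both outcomes yield utility greater than $-M_1$. For deviations $\hat{a}_i$ that violate \confluentconstr{} or some edge capacity against $a_{-i}^*$, the utility is $-M_1$, which is trivially dominated by $u_i(a_i^*, a_{-i}^*)$.

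The main obstacle I anticipate is the second stage's ``monotonicity of feasibility'' step: I need to argue carefully that restricting attention to the earlier-players' prefix never forbids a deviation that the full profile allows, and that \bestresponsealgo{}'s search space at step $i$ truly contains every such $\hat{a}_i$. Confluence is straightforward (fewer routes impose fewer meeting-point constraints on the forest $\cF$), and capacity constraints are monotone (removing later players' flows can only slacken each edge's per-timestep load), so the claim should go through. The remainder of the argument is routine given the correctness of \bestresponsealgo{} cited from the full version~\cite{islam2024strategic}.
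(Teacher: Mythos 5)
Your second stage is, in substance, the paper's own argument for the equilibrium property: the paper phrases it as a contradiction with two observations --- (I) a confluent deviation against the full profile is confluent against the prefix, and (II) a capacity-feasible, on-time deviation against the full profile remains so against the prefix and yields the same utility value --- which is exactly your ``monotonicity of feasibility'' plus the fact that, in Case 1 of Equation~\ref{eq:utility}, a player's utility depends only on her own route and schedule. That part is sound.

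The genuine gap is in your first stage. The invariant you carry through the induction --- every processed player has utility greater than $-M_1$ and the routes are jointly confluent --- is too weak to produce the feasible candidate you need for $P_i$. To schedule $P_i$'s evacuees so that they all arrive by $T_{max}$, you must know \emph{by when} the evacuees of $P_1,\dots,P_{i-1}$ have cleared the network, not merely that they clear it by $T_{max}$: if the earlier players' best responses happened to spread departures out until near $T_{max}$, there would be no room left for $P_i$ within the budget $T_{max}=n\tau+M-1$, and your asserted candidate would not exist. The paper closes this with a quantitative strengthening of the invariant: after processing $P_1,\dots,P_j$, all of their evacuees reach safety by timestep $j\tau + d_1+\cdots+d_j-1$. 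The crucial step is that the \emph{actual} action chosen by \bestresponsealgo{} for $P_j$ cannot have its last evacuee depart later than the one-evacuee-per-timestep fallback schedule (otherwise it would not be $P_j$'s best action at her turn), so the fallback's completion bound transfers to the returned action and the bounds telescope to $N\tau+M-1\leq T_{max}$. Your proposal asserts ``so that no capacity is violated and all evacuees arrive by $T_{max}$'' without this bookkeeping, and that assertion does not follow from your inductive hypothesis. Two smaller points: a deviation that merely violates \confluentconstr{} does not receive utility $-M_1$ under Equation~\ref{eq:utility} --- it is simply outside the deviation set because the equilibrium is defined under \confluentconstr{}; and your induction should also record that $P_i$'s chosen action preserves the earlier players' Case-1 status (this is what the remaining-capacity tracking on shared tree edges inside \bestresponsealgo{} guarantees), since you rely on $u_i(a_i^*,a_{-i}^*)>-M_1$ for the outer equalities in your second stage.
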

\begin{proof}
First we prove that \seqactionalgo{} returns an equilibrium under the confluent route constraint. We prove it by contradiction. 

Let $seq$ denote the sequence of players used by \seqactionalgo{}, and $\hat{a}$ denote the outcome returned by \seqactionalgo{}. Let's assume that $\hat{a}$ is not an equilibrium under \confluentconstr{}. Then, there exists a player $i$ who can unilaterally deviate from $\hat{a}$ and get a higher utility, while satisfying the \confluentconstr{}. Let the better action for player $i$ be denoted by $a^\prime_i$. Let $a^\prime = (a^\prime_i, \hat{a}_{-i})$. Then, 
\begin{align}
    \label{ineq:better_action}
    u_i(a^\prime) > u_i(\hat{a}) \geq -M_1  
\end{align}
Let $P_{before}$ denote the set of players that come before player $i$ in $seq$. Let's consider the action $a^\prime_i$, compared to $\hat{a}_i$, for player $i$ at her turn in \seqactionalgo{}. 

\begin{enumerate}[leftmargin=*]
\item[I)] As $a^\prime_i$ satisfies \confluentconstr{} with all players, it satisfies \confluentconstr{} with players in $P_{before}$.
\item[II)] From (\ref{ineq:better_action}), we have $u_i(a^\prime) > -M_1$. This means, with all players present $(i)$ $a^\prime_i$ does not cause any capacity violation on player $i$'s route, and $(ii)$ all evacuees under player $i$ reach a safe node within $T_{max}$. Then, $(i)$ and $(ii)$ are also true if $P_{before} \cup \{i\}$ were the only players in the game. Moreover, player $i$ would get the same utility, equal to $u_i(a^\prime)$. 
\end{enumerate}

(I) and (II), together with (\ref{ineq:better_action}), implies that $\hat{a}_i$ is not the best action of player $i$  under \confluentconstr{} during her turn in \seqactionalgo{} (as $a^\prime_i$ is better). That is a contradiction (\seqactionalgo{} line \ref{algline:choose_best_response}).

Now, we also need to prove that with outcome $\hat{a}$, every player gets a utility greater than $-M_1$. Note that, by using \bestresponsealgo{} within $\seqactionalgo{}$, we ensure that there is no capacity violation. So, we only need to prove that, with $\hat{a}$, all evacuees reach a safe node within $T_{max}$. We prove this by utilizing Assumption \ref{assumption:completion_time_upper_bound} and a similar argument provided in the proof of Lemma \ref{lemma:cmpletion_time_upper_bound}. The complete proof is provided in the full version of the paper \cite{islam2024strategic}.
\end{proof}

\section{Empirical Evaluation}
\label{sec:experiment_results}
\begin{figure}[t]
    \centering
    \begin{subfigure}[b]{0.49\textwidth}
        \includegraphics[width=\textwidth]{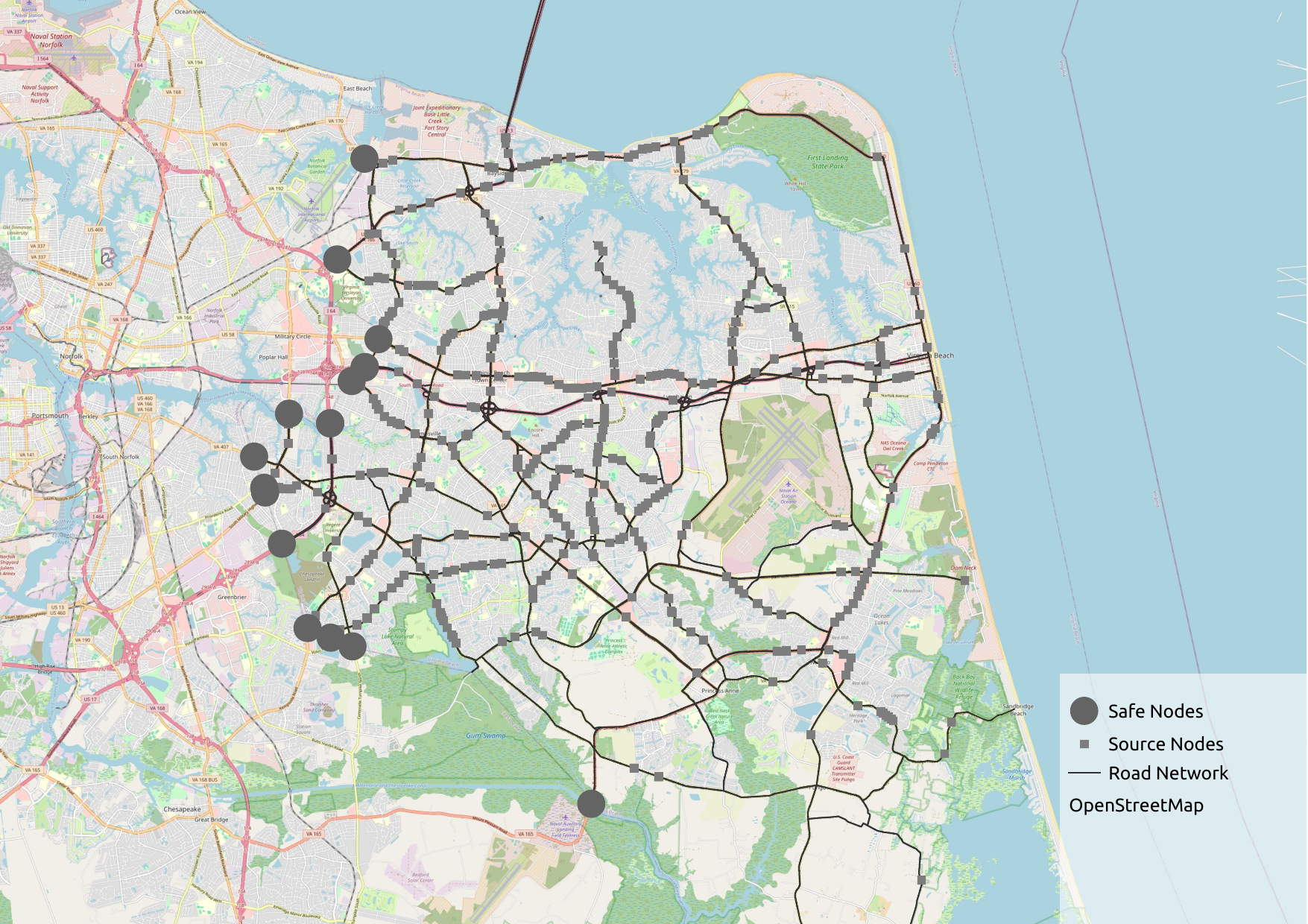}
        \caption{Evacuation network $(\cG)$.}
        \label{fig:vbc}
    \end{subfigure}
    \par\bigskip
    \begin{subfigure}[b]{0.5\textwidth}
        \centering
        \begin{tabular}{p{5cm}p{2.25cm}}
        \toprule
        \# of nodes $(n)$ and edges $(m)$ in $\cG$ & 3522, 7003\\
        \midrule
        \# of players & 501\\
        \midrule
        \# evacuees (i.e. households) $(M)$ & $165$ Thousand\\
        \midrule
        Evacuation time upper bound $(T_{max})$ & $8$ hours\\
        \midrule
        Length of one timestep & $0.5/1/2$ minutes\\
        \bottomrule
        \end{tabular}
        \caption{Game instance details.}
        \label{tab:hc}
    \end{subfigure}
    \caption{Virginia Beach City game instance.}
    \label{fig:vbc_game}
\end{figure}

\begin{figure}[t]
    \centering
    \begin{subfigure}[b]{0.49\textwidth}
        \includegraphics[width=\textwidth]{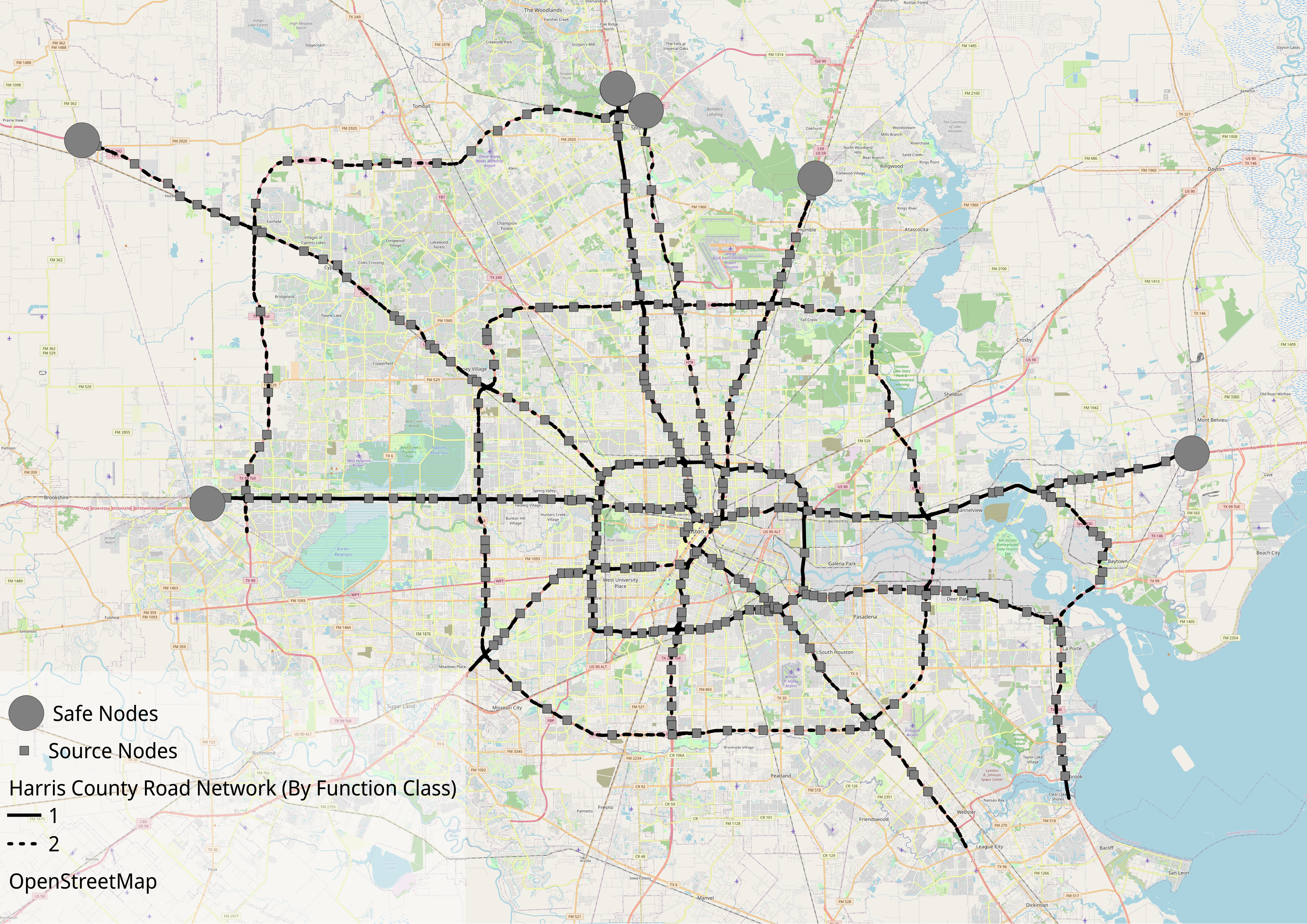}
        \caption{Evacuation network$(\cG)$.}
        \label{fig:hc}
    \end{subfigure}
    \par\bigskip
    \begin{subfigure}[b]{0.5\textwidth}
        \centering
        \begin{tabular}{p{5cm}p{2.25cm}}
        \toprule
        \# of nodes $(n)$ and edges $(m)$ in $\cG$ & 1338, 1751\\
        \midrule
        \# of players & 374\\
        \midrule
        \# evacuees (i.e. households) $(M)$ & $1.5$ Million\\
        \midrule
        Evacuation time upper bound $(T_{max})$ & $15$ hours\\
        \midrule
        Length of one timestep & $0.5/1/2$ minutes\\
        \bottomrule
        \end{tabular}
        \caption{Game instance details.}
        \label{tab:hc}
    \end{subfigure}
    \caption{Harris County game instance.}
    \label{fig:hc_game}
\end{figure}
To evaluate the effectiveness of \seqactionalgo{} in real-world scenarios, we consider two study areas: $(i)$ The City of Virginia Beach in Virginia, and $(ii)$ Harris County in Houston, Texas. 
To construct evacuation network for these two areas, we use road network data from HERE maps~\cite{HERE:20}, and synthetic population data presented by Adiga~\emph{et al.}~\cite{adiga15US}. To define the source nodes, we use the household location of the residents of these areas and map them to the nearest node in the road network. We assume that each household evacuates in a single vehicle, and therefore, we consider them as a single evacuee. For safe nodes, we consider locations on major highways at the periphery of the two areas. 
A visualization and summary of the constructed \evacuationgame{} instances are provided in Figure \ref{fig:vbc_game}, \ref{fig:hc_game}. 

We conduct experiments for testing two aspects of our proposed methodology: $(i)$ quality of the equilibria found by \seqactionalgo{}, and $(ii)$ scalability of \seqactionalgo{}. We performed the experiment on a high-performance computing cluster, with 64GB RAM and 4 CPU cores allocated to our tasks.

\subsection{Quality of the Equilibria}
\begin{figure}[t]
    \centering
    \includegraphics[width=0.475\textwidth]{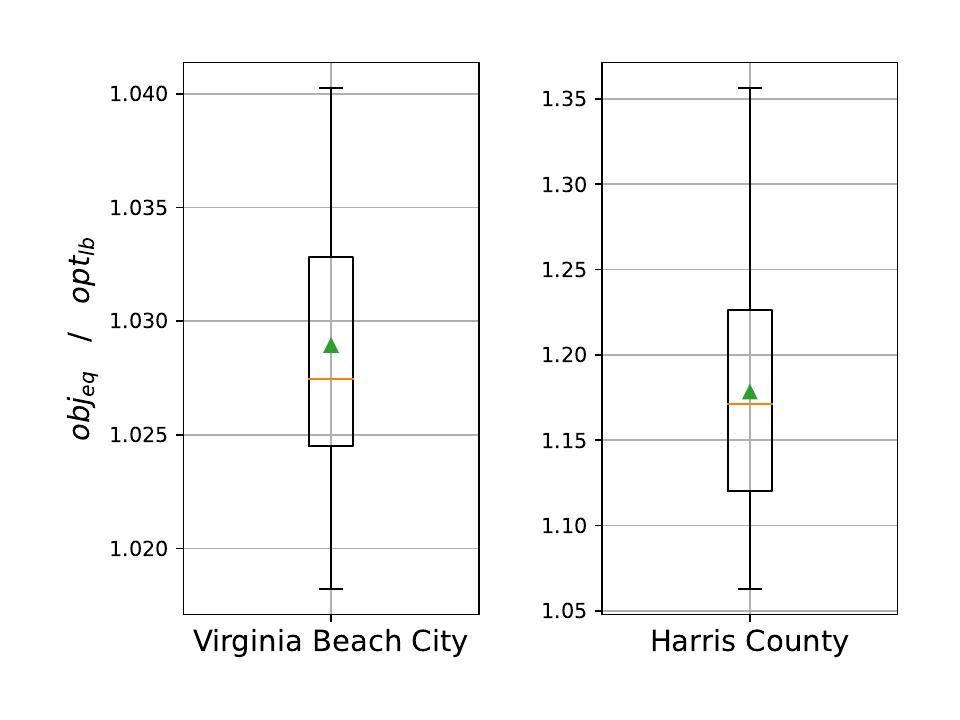}
    \caption{Box-plot showing the ratio of cost at equilibrium to a lower bound of optimal cost, of the $30$ equilibrium outcomes calculated by \seqactionalgo{}, for our two problem instances. The closer the ratio value to $1$, the better.}
    \label{fig:obj_ratio}
\end{figure}
To calculate equilibria in each of our two problem instances, we first sampled $30$ permutations of the players, uniformly at random. 
We used a timestep of $2$ minutes for constructing our game instances. 
We then provided the permutations and the game instances as input to \seqactionalgo{}. This provided us with $30$ equilibria in each of our two problem instances.

Since, \seqactionalgo{} provides us equilibria that have confluent routes, we use the optimal confluent routes and schedule as our baseline. We use the Mixed Integer Program (MIP) formulation provided by Islam and Chen~\emph{et al.}~\cite{islam2023optimal} to find the optimal confluent routes and schedule for our two problem instances. We solved the MIPs with the MIP solver Gurobi~\cite{gurobi}.  However, since finding an optimal outcome is {\sf NP-hard} and also hard to approximate~\cite{islam2023optimal}, the solver was not able to provide us with exact optimal solutions. The solver did provide us with lower-bounds on the optimal cost (i.e. lower-bound on $\min_{a \in A} C(a)$). We denote this lower-bound by $opt_{lb}$, i.e. $opt_{lb} \leq \min_{a \in A} C(a)$. Let the cost (i.e. sum of the evacuation time of all evacuees) in an equilibrium outcome be denoted by $obj_{eq}$. We then look at the ratio $obj_{eq} / opt_{lb}$ to quantify the quality of the equilibria found by \seqactionalgo{}. For example, if the ratio is $2$ for an equilibrium outcome, then it means the cost at that equilibrium is at-most twice the cost at an optimal solution. The closer the ratio value to $1$, the better.

Figure \ref{fig:obj_ratio} shows the ratio values using a boxplot. We observe that for Virginia Beach City, the mean value of the ratio is $\sim1.03$ with a standard deviation of $\sim0.006$. For Harris county, the mean ratio value is $\sim1.17$ with a standard deviation of  $\sim0.069$. This indicates that for our two problem instances, \seqactionalgo{} has found equilibria that have social objective close to the optimal social objective.

\subsection{Scalability of \seqactionalgo{}}
\begin{figure}[t]
    \centering
    \begin{subfigure}[b]{0.5\textwidth}
        \begin{tabular}{p{2.5cm}p{1.5cm}p{1.5cm}p{1.5cm}}
        \toprule
        Timestep\newline(minutes) & $2$& $1$& $0.5$\\
        \midrule
        TEG nodes,\newline edges & $801$K, $1.5$M & $1.8$M, $3.3$M & $3.7$M, $6.8$M\\
        \midrule
        MIP solve time\newline (minutes) & $28$ & $ > 120$ & $ > 120$\\
        \midrule
        SAA avg. run\newline time (minutes) & $14$ & $20$ & $27$ \\
        \bottomrule
        \end{tabular}
        \caption{Virginia Beach City.}
        \label{fig:vbc_runtime}
    \end{subfigure}
    \par\bigskip
    \begin{subfigure}[b]{0.45\textwidth}
        \begin{tabular}{p{2.35cm}p{1.6cm}p{1.5cm}p{1.5cm}}
        \toprule
        Timestep\newline(minutes) & $2$ & $1$ & $0.5$\\
        \midrule
        TEG nodes,\newline edges & $685$K, $842$K & $1.4$M, $1.7$M & $2.9$M, $3.5$M\\
        \midrule
        MIP solve time\newline (minutes) & $56$ & $> 120$ & $> 120$\\
        \midrule
        SAA avg. run\newline time (minutes) & $34$ & $36$ & $37$ \\
        \bottomrule
        \end{tabular}
        \caption{Harris County.}
        \label{fig:hc_runtime}
    \end{subfigure}
    \caption{MIP solve time and \seqactionalgo{} run time at different timestep lengths (`K' and `M' denote `Thousand' and `Million', respectively). We observe that, when timestep length is halved, the size of the Time Expanded Graph is roughly doubled. This drastically increases the solve time of the MIP (no feasible solution found in $2$ hours for timestep length $1$ and $0.5$ minute). Impact on the run time of \seqactionalgo{}, on the other hand, is limited.}
    \label{fig:saa_scalability}
\end{figure}
Existing methods \cite{even2015convergent,romanski2016benders,hafiz2021large,Islam2022fairness,islam2023optimal} for finding evacuation routes and schedule mainly utilize Mixed Integer Program (MIP) formulations of the problem and rely on a Time-Expanded Graph (TEG). This time-expanded graph contains a copy of each node and edge of the evacuation network at each timestep; i.e. the size of the time-expanded graph increases linearly with the number of timesteps. The size of the time-expanded graph determines the number of variables and constraints in the MIPs, and therefore, affects the time it takes to solve the MIP. As a result, existing methods suffer from a scalability issue when the size of the time-expanded graph becomes too large.

To demonstrate the above-mentioned scalability issue of existing methods, we perform experiments with three different timestep lengths: $2$ minute, $1$ minute, and $0.5$ minute. As shown in Figure \ref{fig:saa_scalability}, we see that when the timestep length is halved, the size of the time-expanded graph (in terms of number of nodes and edges) is (roughly) doubled. This drastically increases the solve time of the MIP. In fact, Gurobi was not able to find a feasible solution in $2$ hours for timestep length $1$ and $0.5$ minute. 

\seqactionalgo{}, on the other hand, does not rely on a time-expanded graph. As a result, the impact of using smaller timesteps on the run time of \seqactionalgo{} is limited. This demonstrates that \seqactionalgo{} is more scalable, compared to existing MIP based solution methods.
\section{Conclusion}
In this paper, we have presented a strategic routing and scheduling game (\textit{Evacuation Game: Routing and Scheduling}, \evacuationgame{}) where players choose their routes and departure schedule. We proved that every instance of our game has at-least one pure strategy Nash equilibrium. We also showed that an optimal outcome in an \evacuationgame{} instance will always be an equilibrium in that instance. We provided bounds on the Price of Stability and Price of Anarchy of our game instances. We then presented a polynomial-time algorithm (\textit{Sequential Action Algorithm}, \seqactionalgo{}) for finding equilibria under a practically useful and well-known constraint in the evacuation planning literature, namely the Confluent Route Constraint (\confluentconstr{}). We performed experiments on two study areas: Virginia Beach City in Virginia, and Harris County in Houston Texas. Our experiment results show that \seqactionalgo{} efficiently finds equilibria that have social objective close to the optimal social objective.

\section*{Acknowledgment}
This work was partially supported by University of Virginia Strategic Investment Fund SIF160, the NSF Grants: CCF-1918656, OAC-1916805, RISE-2053013, and the NASA Grant 80NSSC22K1048.

\appendix
\section{Proof of Lemma \ref{lemma:cmpletion_time_upper_bound}}
\label{sec:app_lemma_1_proof}
\textbf{Lemma \ref{lemma:cmpletion_time_upper_bound}:} \textit{Given an evacuation network $\cG(\cV, \cA)$, there exists an outcome where all evacuees arrive at some safe node within time $(n\tau + M - 1)$ without any edge capacity violation. Here, $n$ is the number of nodes in $\cG$, $M$ is total number of evacuees, and $\tau = \sum_{e \in \cA} \tau_e$.}

\begin{proof}
\label{proof:cmpletion_time_upper_bound}
In our formulation, we assume that there is at least one path from every source to at least one of the safe nodes in $\cG$. Therefore, there exists a set of routes that connects each source node to at-least one of the safe nodes.
Now, we show that for any given set of evacuation routes, there is a departure time schedule where everyone reaches a safe node within time $(n\tau + M - 1)$ without any edge capacity violation.

Given any set of routes, we can design a departure time schedule as follows: we choose a source $k_1$ and starting from timestep $0$, we let one evacuee depart from it in each timestep. As capacity $c_e \geq 1, \forall e \in \cA$, and there is no other evacuee from the other sources in the network, no capacity violation will occur. Let $d_{k_1}$ denote the number of evacuees at $k_1$. Then the last evacuee will depart at timestep $(d_{k_1} - 1)$. Now, any route, which is a simple path, will have a travel time of at most $\tau = \sum_{e \in \cA} \tau_e$. Therefore, all $d_{k_1}$ evacuees from source $k_1$ will reach safety within time $(\tau + d_{k_1} - 1)$. 

We then choose another source $k_2$ and starting from timestep $(\tau + d_{k_1})$, we let one evacuee depart from $k_2$ at each timestep. Now, because all evacuees from source $k_1$ will already have reached a safe node, and there are no other evacuees in the network, therefore no capacity violation will occur. The last evacuee from $k_2$ will depart at timestep $\tau + d_{k_1} + d_{k_2}-1$ and reach safety within timestep $2\tau + d_{k_1} + d_{k_2} - 1$. We evacuate all the other sources in the same way. Now, the number of source nodes is $N$. So, all $M$ evacuees will reach safety by timestep $N\tau + \sum_{k \in \cE} d_k - 1 = N\tau + M -1 \leq n\tau + M -1$ (as $N \leq n$).

So, there exists a set of evacuation routes and a departure time schedule such that all evacuees arrive at some safe node within timestep $(n\tau + M - 1)$ without any edge capacity violation. We can assign these routes and departure time schedule to the players, which will provide us our desired outcome.
\end{proof}

Note that, Lemma \ref{lemma:cmpletion_time_upper_bound} remains valid even when we require the set of evacuation routes to be confluent. This is because, Even and Pillac~\emph{et al.}~\cite{even2015convergent} proved that there exists a set of evacuation routes that are confluent, when each source node has a path to at-least one safe node. We can then take such a set of confluent routes and construct a departure time schedule as described in the proof of Lemma \ref{lemma:cmpletion_time_upper_bound}. Then all the evacuees will be able to reach a safe node within time $(n\tau + M -1)$ without any edge capacity violation.

\section{\bestresponsealgo{} Details and Proofs}

\subsection{Data Structure for representing a departure time schedule}
\label{sec:schedule_data_structure}
To represent the departure time schedule of a player $i$, we use a list of 2-tuples: $(t, f_t)$. Here, $t$ represents \textit{timestep} and $f_t$ represents number of evacuees under player $i$ who depart at timestep $t$. If no evacuee leaves at a timestep $t^\prime$, then we do not include $(t^\prime, 0)$ in the list. An example of our data structure for departure schedule is provided in Example \ref{example:departure_schedule}.
\begin{example}
\label{example:departure_schedule}
Let's consider the departure time schedule: $\{(0, 10),\\ (1, 10), (3, 5)\}$. Here, 10 evacuees depart at timestep 0 and 1; no evacuee departs at timestep 2; 5 evacuees depart at timestep 3. No evacuee departs after timestep 3.
\end{example}

\subsection{Size of \bestresponsealgo{} Input}
The input to \bestresponsealgo{} are the evacuation network $\cG$, and the action of all players except $i$, i.e. $a_{-i}$. Size of $\cG$ is $O(m+n)$ where, $m, n$ denote the number of nodes, edges in $\cG$, respectively. 

For each player $j \in [N], j\neq i$, the action $a_j$ consists of a path $p_j$ and a schedule $schedule_j$. As $p_j$ is a simple path, it's size is $O(n)$. In $schedule_j$, there can be at most $M_j$ elements (see Section \ref{sec:schedule_data_structure}), where $M_j$ is the number of evacuees under player $j$. So, size of the action $a_j$ is $O(n + M_j)$. The number of players is $N \leq n$. So, the size of $a_{-i}$ is $O((n-1)n + \sum_{j \in [N], j \neq i}M_j)$ or $O(n^2 + M)$ where $M = \sum_{j \in [N]}M_j$ i.e. the total number of evacuees.

So, the total input size is: $O(m+n+n^2+M)$ or $O(m + n^2 + M)$.

\subsection{Calculating the best schedule on a given path for a given capacity}
In \bestresponsealgo{} (Algorithm \ref{alg:solve_brp}), for each distinct capacity $c$ (line \ref{algline:all_capacity_end}), we calculate the best schedule on a given path in two cases: \textit{Case 1:} when player $i$'s path is edge-disjoint from all the other routes (line \ref{algline:best_schedule_edge_disjoint}), and \textit{Case 2:} when player $i$'s path joins an existing tree at a transit node (line \ref{algline:best_schedule_edge_overlap}). We now present how to calculate the best schedule in these two cases.

\begin{algorithm}[t]
    \caption{Calculating best schedule on edge-disjoint path (Case 1). Running Time: $O(M)$}
    \label{alg:best_schedule_edge_disjoint_path}
    \KwIn{ 
        Player $i$,
        Capacity $c$,
        Edge-disjoint path $p$, 
        Evacuation network $\cG$
    }
    \KwOut{Best schedule of player $i$ on path $p$}
    $d_i \gets$ Number of evacuees under player $i$.\\
    $departure\_schedule = \{\}$\\
    $t \gets 0$\\
    \While{$d_i > 0$}{ \label{algline:disjoint_best_schedule_loop}
        $f_t \gets \min(c, d_i)$\\
        Add $(t, f_t)$ to $departure\_schedule$\\
        $d_i \gets d_i - f_t$\\
        $t \gets t+1$
    }
    \Return $departure\_schedule$
\end{algorithm}

\textbf{Case 1:} Let the edge disjoint path of player $i$ be denoted by $p$. Note that, capacity on path $p$ will be at least $c$, as $p$ is calculated from $\cG_c^\prime$ (line \ref{algline:case_1_start}). As none of the other players use any edge $e \in p$, therefore, the available capacity on all of these edges will be the same at all timesteps. So we can use Algorithm \ref{alg:best_schedule_edge_disjoint_path} to find the best schedule on path $p$. Here, we send $c$ (or the remaining) number of evacuees at each timestep until we have sent out all the evacuees.

\begin{algorithm}[t]
    \caption{Calculating best schedule on a path that joins an existing tree (Case 2).\newline Running Time: $O(M)$}
    \label{alg:best_schedule_edge_overlap_path}
    \KwIn{ 
        Player $i$ (source node denoted by $x$),
        Capacity $c$,
        Transit node $w$,
        Safe node $z$, 
        Evacuation network $\cG$
    }
    \KwOut{Best schedule of player $i$ on path $p_{xwz}$}
    $d_i \gets$ Number of evacuees under player $i$.\\
    $\tau_{xw} \gets$ Travel time on path $p_{xw}$.\\
    $departure\_schedule \gets \{\}$\\
    $t \gets 0$\\
    \While{$d_i > 0$}{ \label{algline:overlap_best_schedule_loop}
        $c^\prime \gets remaining\_capacity[p_{wz}, t+\tau_{xw}]$\\
        $f_t \gets min(c, c^\prime, d_i)$ \label{algline:calc_flow}\\
        \If{$f_t > 0$}{
            Add $(t, f_t)$ to $departure\_schedule$.\\
            $d_i \gets d_i - f_t$
        }
        $t \gets t+1$
    }
    \Return $departure\_schedule$
\end{algorithm}

\textbf{Case 2:} Let's assume that player $i$'s route joins an existing tree at transit node $w$ and then by following the tree the route ends at safe node $z$. Let's denote player $i$'s source node by $x$. Then the path of player $i$ would be $p_{xwz}$. Now, capacity on the path $p_{xw}$ (i.e. path from $x$ to $w$) will be the same at all timesteps (and at least $c$), as the other players do not use any edge on $p_{xw}$ ($p_{xw}$ is calculated from $\cG_c^\prime$ in line \ref{algline:case_2_route_to_transit_calc}). However, the capacity on path $p_{wz}$ can be different at different timesteps, as other players are using edges on $p_{wz}$. We, therefore, need to keep track of the remaining capacity (at different timesteps) on the edges that are being used by the other players. Let, $remianing\_capacity[p, t]$ denote the remaining capacity on path $p$ when entering at timestep $t$. Then, we can calculate the best schedule on $p_{xwz}$ using Algorithm \ref{alg:best_schedule_edge_overlap_path}. Here, starting at timestep 0, we calculate the largest number of evacuees we can send at each timestep (line \ref{algline:calc_flow}), and then send that number of evacuees.

To calculate the remaining capacity on paths (i.e.,\\ $remaining\_capacity[p, t]$), we do some preprocessing with the actions of the other players $a_{-i}$. We first construct the forest $\cF$ of paths from $a_{-i}$. Note that, the path that starts from an edge $e$ in $\cF$ and ends at a safe node, is unique. This is because the paths in $a_{-i}$ are confluent (Assumption \ref{assumption:existing_routes_confluent}). Therefore, if $e_{first}$ is the first edge on $p$, then we can use $remaining\_capacity[e_{first}, t]$ to denote $remaining\_capacity[p, t]$. We use a hash table to save these remaining capacities. Algorithm \ref{alg:preprocess} shows the preprocessing steps.

\begin{algorithm}[t]
    \caption{\textit{preprocess}: preprocessing $a_{-i}$ for \bestresponsealgo{}. Time Complexity $O(n^2 + Mn)$}
    \label{alg:preprocess}
    \KwIn{Action of all the players except $i$: $a_{-i}$, Evacuation network $\cG$.}
    \KwOut{
        Route forest from $a_{-i}$: $\cF$, 
        Remaining capacity on edges: $remaining\_capacity$
    }
    $\cF \gets$ Directed forest constructed using the routes from $a_{-i}$\\
    $remaining\_capacity \gets$ Empty hash table.\\
    \For{each player $j \in [N], j \neq i$}{
        $path_j \gets$ Sequence of edges on path of player $j$ in $a_{-i}$.\\
        $schedule_j \gets$ Departure time schedule of player $j$ in $a_{-i}$.\\
        \For{$(t, f_t)$ in $schedule_j$}{ 
            $path_j(t) \gets$ Empty Sequence.\\
            $t_{entry} \gets t$\\
            \For{$e \in path_j$}{ \label{algline:path_time_loop_start}
                Append $(e, t_{entry})$ to $path_j(t)$.\\
                $t_{entry} \gets t_{entry} + \tau_e$ \label{algline:path_time_loop_end}
            }
            \For{$(e, t_{entry}) \in path_j(t)$}{ \label{algline:update_loop_start}
                \If{$(e, t_{entry}) \notin remaining\_capacity$}{
                    $remaining\_capacity[e, t_{entry}] \gets c_e$
                }
                $remaining\_capacity[e, t_{entry}] \gets remaining\_capacity[e, t_{entry}] - f_t$ \label{algline:update_capacity} \label{algline:update_loop_end}\\
            }
            $(e, t_{entry}) \gets$ last element in sequence $path_j(t)$ \label{algline:backward_update_start}\\
            backward\_update(\newline$\cF, e, t_{entry},$\newline$remaining\_capacity[e, t_{entry}],$\newline$remaining\_capacity$\newline) \label{algline:backward_update_end}
        }
    }
    \Return $\cF, remaining\_capacity$
\end{algorithm}

\begin{algorithm}[t]
    \caption{\textit{backward\_update}: backward update of path capacity in preprocess. Time Complexity $O(n)$}
    \label{alg:backward_update}
    \KwIn{
        Forest: $\cF$, 
        Edge: $e(u, v)$,
        Edge entry time: $t_{entry}$,
        Path capacity: $c_p$, 
        Remaining capacity on edges: $remaining\_capacity$
    }
    \KwOut{Updated remaining capacity on edges: $remaining\_capacity$}
    \For{$e^\prime(w, u) \in \delta^-_{\cF}(u)$}{
        $t^\prime_{entry} \gets t_{entry} - \tau_{e^\prime}$\\
        \If{$(e^\prime, t^\prime_{entry}) \notin remaining\_capacity$}{
            $remaining\_capacity[e^\prime, t^\prime_{entry}] \gets c_{e^\prime}$
        }
        $remaining\_capacity[e^\prime, t^\prime_{entry}] \gets \min(c_p, remaining\_capacity[e^\prime, t^\prime_{entry}])$ \label{algline:backward_update_capacity}\\
        backward\_update(\newline$\cF, e^\prime, t^\prime_{entry},$\newline$remaining\_capacity[e^\prime, t^\prime_{entry}],$\newline$remaining\_capacity$\newline)
    }
    
\end{algorithm}

In Algorithm \ref{alg:preprocess}, we go over each player $j$, excluding player $i$. Then, based on player $j$'s schedule, we update the remaining capacity of edges on player $j$'s route at appropriate timesteps (line \ref{algline:update_capacity}). When we have updated the capacity of edge $e$ in the forest $\cF$, we have to update the remaining capacity of all edges $e^\prime$ that leads to $e$ in $\cF$; this is because the capacity on edge $e$ limits the capacity of all paths $p$ that go through $e$. We, therefore, do a backward traversal of the forest $\cF$ starting from the last edge on player $j$'s route (Algorithm \ref{alg:preprocess} line \ref{algline:backward_update_start}--\ref{algline:backward_update_end}), and update the capacity values appropriately (Algorithm \ref{alg:backward_update}) line \ref{algline:backward_update_capacity}).

In both Algorithm \ref{alg:best_schedule_edge_disjoint_path} (Case 1) and \ref{alg:best_schedule_edge_overlap_path} (Case 2), for a given path $p$ and capacity $c$, we send the largest number of evacuees possible at every timestep. Therefore, they provide us the best schedule for the given path and capacity. 

\subsection{Proof of Correctness of \bestresponsealgo{}}
We prove the following lemma: 
\begin{lemma}
\label{lemma:brsa_correctness}
Let $a_i$ denote the action returned by \bestresponsealgo{} and $b_i$ denote a best response action of player $i$. Then: $u_i(a_i, a_{-i}) = u_i(b_i, a_{-i})$. 
\end{lemma}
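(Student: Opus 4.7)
The plan is to prove $u_i(a_i, a_{-i}) \geq u_i(b_i, a_{-i})$ (the reverse inequality holds by definition of best response), by exhibiting an element of the candidate set $A_{\text{cand}}$ built by \bestresponsealgo{} whose utility matches or beats $u_i(b_i, a_{-i})$. Since line \ref{algline:highest_utility_start} returns a maximum-utility element of $A_{\text{cand}}$, this suffices.

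First I would characterize the structure forced by \confluentconstr{}. By Assumption \ref{assumption:existing_routes_confluent} the routes in $a_{-i}$ form a directed forest $\cF$ rooted at safe nodes. For $b_i$'s path to be confluent with these, it cannot use any edge $(u,v) \notin \cF$ with $u \in \cF$, since that would create a fork out of $u$. Thus $b_i$'s path lies entirely in $\cG'$ (lines \ref{algline:net_constr_start}--\ref{algline:net_constr_end}) and is either (Case 1) edge-disjoint from $\cF$, or (Case 2) traverses a prefix in $\cG' \setminus \cF$ from player $i$'s source $x$ to some transit node $w \in \cF$ and then follows $\cF$'s unique path from $w$ to a safe node.

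Next I would reduce $b_i$'s path to one considered by \bestresponsealgo{}. Let $c^*$ denote the minimum edge capacity along the non-overlapping portion of $b_i$'s path; then this portion lies in $\cG'_{c^*}$ and $c^* \in C$. I would argue that replacing it with a shortest-travel-time path in $\cG'_{c^*}$ sharing the same endpoint (safe node in Case 1, transit node $w$ in Case 2) weakly improves utility. In Case 1 every dispatched evacuee's travel time shrinks by the savings $\Delta\tau$ without affecting throughput feasibility, since the bottleneck capacity is still at least $c^*$. In Case 2, shifting every departure later by $\Delta\tau$ on the shorter prefix makes arrivals at $w$ land at exactly the absolute times used in $b_i$, so the shared suffix remains feasible and each evacuee's total evacuation time $t^d + \tau_{xw} + \tau_{wz}$ is unchanged (the $\Delta\tau$ increase in $t^d$ cancels the $\Delta\tau$ decrease in $\tau_{xw}$); any extra slack from dispatching earlier can only help. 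Hence the shortest-path entry of $A_{\text{cand}}$ at capacity $c^*$ with the matching endpoint weakly dominates $b_i$'s path choice.

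Finally I would verify that the per-path scheduling subroutines are optimal. For a fixed candidate path $p$ with per-timestep admissible throughput profile $\alpha(t)$ (the constant $c$ of Case 1, or $\min(c, r(t+\tau_{xw}))$ in Case 2 with $r(\cdot)$ the precomputed suffix remaining-capacity profile), maximizing $u_i$ subject to $f_t \leq \alpha(t)$ and $\sum_t f_t = |W(i)|$ reduces to minimizing the sum of departure times, which is solved by the earliest-slot-first greedy of Algorithms \ref{alg:best_schedule_edge_disjoint_path} and \ref{alg:best_schedule_edge_overlap_path}; a standard exchange argument (moving any later-dispatched evacuee into an earlier unsaturated slot strictly reduces the objective) rules out further improvement. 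The main obstacle, as I see it, is the Case 2 translation argument: one must pin down rigorously that holding the shared suffix fixed while shortening the source-to-$w$ prefix never decreases achievable utility, which requires careful tracking of how absolute-time suffix capacities interact with the dispatch schedule. Once that is handled, the rest is structural enumeration plus textbook greedy-scheduling optimality.
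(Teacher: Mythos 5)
Your proposal is correct and follows essentially the same route as the paper's proof: the same Case 1/Case 2 decomposition by whether the best-response path is edge-disjoint from the forest or joins it at a transit node, the same reduction to the shortest path at the bottleneck capacity $c^*\in C$ (which the paper calls $c_b$), and the same conclusion via maximality over $A_{\text{cand}}$. Your explicit time-translation argument for Case 2 and the exchange argument for greedy-schedule optimality are just more careful spellings of steps the paper asserts directly.
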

\begin{proof}
As $b_i$ is a best response of player $i$, therefore, by definition: $u_i(b_i, a_{-i}) \geq u_i(a_i, a_{-i})$. So, we only need to prove that: $u_i(a_i, a_{-i}) \geq u_i(b_i, a_{-i})$.

Let $x$ denote the source node of player $i$. There are two possible cases for the path in $b_i$; \textit{Case 1:} the path in $b_i$ is disjoint from all the other routes, and \textit{Case 2:} the path in $b_i$ joins an existing tree of paths at a transit node. 
Let us consider these two possibilities individually.

\textbf{Case 1:} Let, $p_b$ denote the path in $b_i$. Let the travel time and capacity on path $p_b$ be denoted by $\tau_b, c_b$ respectively. Let's also assume that $p_b$ ends at safe node $z$. Now, \bestresponsealgo{} will consider the capacity $c_b$ (Algorithm \ref{alg:solve_brp} line \ref{algline:all_capacity_end}) and consider a shortest travel time path $p_{a^\prime}$ from $x$ to $z$ that has capacity at least $c_b$. Let, the travel time on path $p_{a^\prime}$ be denoted by $\tau_{a^\prime}$. Then, $\tau_{a^\prime} \leq \tau_b$. \bestresponsealgo{} will calculate the best schedule $schedule_{a^\prime}$ for path $p_{a^\prime}$ (using Algorithm \ref{alg:best_schedule_edge_disjoint_path}) and capacity $c_b$ and then add the action $a^\prime_i = (p_{a^\prime}, schedule_{a^\prime})$ to $A_{cand}$ (Algorithm \ref{alg:solve_brp} line \ref{algline:case_1_end}). Now, because in both $a^\prime_i$ and $b_i$, the same path capacity $c_b$ is used, and because $\tau_{a^\prime} \leq \tau_{b}$, so: $$u_i(a^\prime_i, a_{-i}) \geq u_i(b_i, a_{-i})$$

\textbf{Case 2:} Here, we assume that the path in $b_i$ joins an existing tree of paths at a transit node. Let this transit node be denoted by $w$. Let $p_b$ denote the path in $b_i$, that starts at source node $x$, goes through transit node $w$, and ends at safe node $z$. Now, $p_b$ has two parts: path from $x$ to $w$ (denoted by $p^{xw}$) and path from $w$ to $z$ (denoted by $p^{wz}$). The edges on $p^{xw}$ are not used by other players, therefore, the capacity on path $p^{xw}$ will be same at all timesteps. Let, the capacity of path $p^{xw}$ be denoted by $c^{xw}_b$. Also, let $\tau^{xw}_b$ denote the travel time on path $p^{xw}$.
Now, \bestresponsealgo{} will consider a shortest path from $x$ to $w$ (let use denote it by $p^{xw}_{a^\prime}$) that has capacity at least $c^{xw}_b$ (Algorithm \ref{alg:solve_brp} line \ref{algline:all_capacity_end}). Let the travel time on path $p^{xw}_{a^\prime}$ be $\tau^{xw}_{a^\prime}$. Then, $\tau^{xw}_{a^\prime} \leq \tau^{xw}_b$. 
\bestresponsealgo{} will calculate the best schedule $schedule_{a^\prime}$ for path $p_{a^\prime}=p^{xw}_{a^\prime}p^{wz}$ and capacity $c^{xw}_b$ (using Algorithm \ref{alg:best_schedule_edge_overlap_path}) and then add the action $a^\prime_i = (p_{a^\prime}, schedule_{a^\prime})$ to $A_{cand}$ (Algorithm \ref{alg:solve_brp} line \ref{algline:case_1_end}).
Now, because in both $a^\prime_i$ and $b_i$, the same path capacity $c^{xw}_b$ is used from $x$ to $w$, the same path from $w$ to $z$ is used, and $\tau^{xw}_{a^\prime} \leq \tau^{xw}_b$ (i.e., evacuees will reach the transit node $w$ earlier in $a^\prime_i$ than $b_i$), so: $$u_i(a^\prime_i, a_{-i}) \geq u_i(b_i, a_{-i})$$

So, in both cases an action $a^\prime_i$ will be added to $A_{cand}$ where:
\begin{align}
    \label{eq:case_1_candidate}
    u_i(a^\prime_i, a_{-i}) \geq u_i(b_i, a_{-i})
\end{align}
Now, because $a_i$ is the action with maximum utility over all actions in $A_{cand}$ (Algorithm \ref{alg:solve_brp} line \ref{algline:highest_utility_start}), so,
\begin{align}
    \label{eq:case_1_best}
    u_i(a_i, a_{-i}) \geq u_i(a^\prime_i, a_{-i})
\end{align}
Combining inequalities \ref{eq:case_1_candidate}, \ref{eq:case_1_best} we have: $u_i(a_i, a_{-i}) \geq u_i(b_i, a_{-i})$.
\end{proof}

It follows directly from Lemma \ref{lemma:brsa_correctness} that \bestresponsealgo{} returns a best response action.

\subsection{Time Complxity of \bestresponsealgo{}}
\subsubsection{Time complexity of \textit{backward\_update}}
Here, in worst case we will traverse all edges in the forest $\cF$ (constructed from the paths in $a_{-i}$). Because $\cF$ is a forest, the number of edges in $\cF$ is $O(n)$ ($n$ is number of nodes in the evacuation network $\cG$). Therefore, the time complexity of \textit{backward\_update} is $O(n)$.

\subsubsection{Time Complexity of \textit{preprocess}}
The length of the path of player $j$ is $O(n)$ (as the paths are simple path). Therefore, the time complexity of the loop in lines \ref{algline:path_time_loop_start}--\ref{algline:path_time_loop_end} is $O(n)$. Considering the hash table operations to be $O(1)$, time complexity of the loop in lines \ref{algline:update_loop_start}--\ref{algline:update_loop_end} is $O(n)$. 

Let $M_j$ denote the number of evacuees under player $j$. Then, the schedule of player $j$ can have at most $M_j$ elements in it (see Section \ref{sec:schedule_data_structure}). So, time complexity of processing player $j$ is: $O(M_jn)$. Time complexity of processing all players except $i$ is $O(n\sum_{j\in [N], j\neq i}M_j)$ or $O(Mn)$ ($M = \sum_{j\in [N]}M_j$, i.e. total number of evacuees).

For constructing the forest we will go over each player's route of length $O(n)$. Number of players is $N <= n$, so $O(n)$. So, complexity of constructing the forest is $O(n^2)$.

So, the total complexity of \textit{preprocess} is $O(n^2 + Mn)$.

\subsubsection{Time Complexity of Calculating Best Schedule (Algorithm \ref{alg:best_schedule_edge_disjoint_path}, \ref{alg:best_schedule_edge_overlap_path})}
In Algorithm \ref{alg:best_schedule_edge_disjoint_path}, because the path $p$ is disjoint, we can send at least one player at each timestep. So, the loop in line \ref{algline:disjoint_best_schedule_loop} can run at most $O(M_i)$ or $O(M)$ times. All operations within the loop are $O(1)$. So, running time of Algorithm \ref{alg:best_schedule_edge_disjoint_path} is $O(M)$.

In Algorithm \ref{alg:best_schedule_edge_overlap_path}, evacuees from player $j$ may be blocked by evacuees from other players. However, this blocking can happen in at most $\sum_{j \in [N], j \neq i}M_j$ timesteps. This is because, a single evacuee (under some player $j \neq i$) can cause a blocking for player $i$ in at most one timestep (because the paths are confluent, Assumption \ref{assumption:existing_routes_confluent}, and there is no waiting allowed on the roads). Together with the $\sum_{j \in [N], j \neq i}M_j$ blocking timesteps, player $i$ will also require at most $O(M_i)$ timesteps to send her evacuees. So, the loop in line \ref{algline:overlap_best_schedule_loop} will run for $O(\sum_{j \in [N]}M_j)$ or $O(M)$ times. Considering the hash table operations to be $O(1)$, the complexity of Algorithm \ref{alg:best_schedule_edge_overlap_path} is $O(M)$.

\subsubsection{Running Time of \bestresponsealgo{}}
Now, we calculate the time complexity of \bestresponsealgo{}. First, we perform preprocess with time complexity $O(n^2+Mn)$. Then, we will execute Algorithm \ref{alg:solve_brp}. In line \ref{algline:construct_forest_brp}, we construct the forest: $O(n^2)$. Lines \ref{algline:net_constr_start}--\ref{algline:net_constr_end} will take $O(m+n)$, here $n, m$ denote number of nodes, edges in $\cG$, respectively. Number of distinct edge capacity values is: $O(m)$. Constructing the network $\cG^\prime_c$ is: $O(m+n)$. Calculating the shortest paths in line \ref{algline:shortest_path} is: $O(m\log n)$. Number of safe nodes is: $O(n)$, calculating the best schedule on a disjoint path is: $O(M)$. So, complexity of lines \ref{algline:case_1_start}--\ref{algline:case_1_end} is $O(Mn)$. Similarly, the complexity of lines \ref{algline:case_2_start}--\ref{algline:case_2_end} is also $O(Mn)$. So, complexity of the entire loop starting at line \ref{algline:all_capacity_end} is $O(m(m+n+m\log n + Mn))$ or $O(Mmn + m^2\log n)$. As, number of actions in $A_{cand}$ is $O(n)$, running time of line \ref{algline:choose_best_response} is $O(n)$. So, total running time of calculating best response: $O(n^2 + Mn + Mmn + m^2\log n + n)$ or $O(n^2 + m^2\log n + Mmn)$.

\section{\seqactionalgo{} Details and Proofs}
\subsection{Proof of Theorem \ref{theorem:saa_correctness}}
\label{sec:before_tmax_proof}
\textbf{Theorem \ref{theorem:saa_correctness}:} \textit{\seqactionalgo{} returns a pure strategy Nash equilibrium (under the Confluent Route Constraint), where all players get a utility greater than $-M_1$}

\begin{proof}
We have already proved in Section \ref{sec:eq_calc} that \seqactionalgo{} returns a pure strategy Nash equilibrium under the Confluent Route Constraint. Here, we prove that with the outcome returned by \seqactionalgo{}, all evacuees reach a safe node within time $T_{max}$.

Let $seq$ denote the sequence of players used by \seqactionalgo{}, and $\hat{a}$ denote the outcome returned by \seqactionalgo{}.
Let, $d_i$ denote the number of evacuees under player $i$. For the first player $P_1$ in $seq$, no matter what route is chosen in $\hat{a}$, a possible departure time schedule for $P_1$ is as follows: starting from timestep $0$, one evacuee will leave at every timestep. This way all evacuees under $P_1$ will depart by timestep $d_1-1$. In $\hat{a}$, all evacuees under $P_1$ will also depart by timestep $d_1-1$ (otherwise $\hat{a}_1$ would not be $P_1$'s best action within \seqactionalgo{}). So, in $\hat{a}$, all evacuees under $P_1$ will reach a safe node within $(\tau + d_1 - 1)$, where $\tau = \sum_{e \in \cA}\tau_e$. Now for player $2$, no matter what route is chosen in $\hat{a}$, following is a possible departure time schedule: starting from timestep $(\tau + d_1)$, let one evacuee under $P_2$ leave at every timestep. Because the evacuees under $P_1$ will have already reached a safe node and there is no other evacuee in the network, all evacuees under $P_2$ will be able to depart within timestep $(\tau + d_1+d_2-1)$ and reach a safe node within time $(2\tau + d_1 + d_2 - 1)$. By the same argument used for $P_1$, in $\hat{a}$, all evacuees under $P_2$ will also depart by timestep $(\tau + d_1+d_2-1)$. Therefore, all evacuees under $P_1, P_2$ will reach safe node within timestep $(2\tau + d_1+d_2-1)$. Continuing this up to the last player $P_N$, in $\hat{a}$, all evacuees will reach a safe node within timestep $N\tau + d_1 + d_2 + ... + d_N -1 \leq n\tau + M - 1 = T_{max}$.
\end{proof}

\subsection{Assumption \ref{assumption:existing_routes_confluent} holds in \seqactionalgo{}}
Within \seqactionalgo{}, we use \bestresponsealgo{} for calculating best response. As, in \bestresponsealgo{}, we have Assumption \ref{assumption:existing_routes_confluent}, we have to ensure that it holds in every iteration of \seqactionalgo{}. 

Let $seq$ denote the sequence of players used by \seqactionalgo{}. For the first player $P_1$ in $seq$, Assumption \ref{assumption:existing_routes_confluent} holds as $a_{-i}$ is empty. Then, for the second player $P_2$ in $seq$ Assumption \ref{assumption:existing_routes_confluent} holds because $(i)$ $a_{-i}$ has only one route, $(ii)$ \bestresponsealgo{} makes sure that there is no capacity violation on $P_1$'s route and $(iii)$ we already proved that evacuees under $P_1$ will reach a safe node before $T_{max}$ (Section \ref{sec:before_tmax_proof}). Now, for the third player $P_3$, Assumption \ref{assumption:existing_routes_confluent} holds because: $(i)$ \bestresponsealgo{} ensures that the paths of $P_1, P_2$ are confluent, $(ii)$ there is no capacity violation on $P_2$'s path (and therefore cannot cause capacity violation on $P_1$'s path either), and $(iii)$ we already proved that evacuees under $P_1, P_2$ will reach safe node within $T_{max}$. Similarly, we can show that Assumption \ref{assumption:existing_routes_confluent} holds for all players upto the last player.

\subsection{Running Time of \seqactionalgo{}}
Within \seqactionalgo{}, we use \bestresponsealgo{} for every player. As the number of players $N \leq n$, so, the running time \seqactionalgo{} is $O(n(n^2 + m^2\log n + Mmn))$ or $O(n^3 + m^2n\log n + Mmn^2)$.

\bibliographystyle{IEEEtran}
\bibliography{references}

\end{document}